\renewcommand{\title}[1]{\vspace{\fill}
\eject\addtolength{\baselineskip}{4pt}
{\bfseries\LARGE #1}\\[3mm]\addtolength{\baselineskip}{-4pt}}
\renewcommand{\author}[3]{\parbox[t]{75mm}
{\begin{center}{\scshape #1}\\[3mm] #2\\
 {\ttfamily #3} \end{center}}}
\newtheorem{thm}{\bfseries Theorem}
\newtheorem{lem}[thm]{\bfseries Lemma}        
\newtheorem{prop}[thm]{\bfseries Proposition} 
\newtheorem{cor}[thm]{\bfseries Corollary}     
\newtheorem{defn}[thm]{\bfseries Definition}
\newenvironment{proof}{\medskip                    
\noindent{\scshape Proof:}}{\quad $\Box$\medskip}  
\begin{document}
\begin{center}
\title{Families of polytopal digraphs that do not satisfy \\ the shelling property}

\author{David Avis\footnotemark[1]
}{
Graduate School of Informatics\\ Kyoto University \\ and \\
School of Computer Science and GERAD \\ McGill University
}{
avis@cs.mcgill.ca
}\footnotetext[1]{Research is supported by NSERC, FQRNT and KAKENHI.}
\author{
Hiroyuki Miyata\footnotemark[2]}
{
Graduate School of Information Sciences,
Tohoku University, \\
Aoba-ku, Sendai, Miyagi, Japan.
}{
hmiyata@dais.is.tohoku.ac.jp
}
\author{
Sonoko Moriyama\footnotemark[2]
}{ 
Graduate School of Information Sciences,
Tohoku University, \\
Aoba-ku, Sendai, Miyagi, Japan. 
}{
moriso@dais.is.tohoku.ac.jp
}\footnotetext[2]{Research is supported by KAKENHI.}

\end{center}

\begin{center}
\vspace{3mm}\today\vspace{3mm}
\end{center}

\begin{quote}
{\bfseries Abstract:}
A polytopal digraph $G(P)$ is an orientation of the skeleton of a
convex polytope $P$.
The possible non-degenerate pivot operations of the simplex method
in solving a linear program over $P$ can be represented as a special
polytopal digraph known as an LP digraph.
Presently there is no general characterization of which polytopal
digraphs are LP digraphs, although four necessary properties are known:
acyclicity, unique sink orientation(USO), the Holt-Klee property
and the shelling property.
The shelling property was introduced by Avis and Moriyama (2009),
where two examples are given in $d=4$ dimensions of polytopal 
digraphs  satisfying
the first three properties but not the shelling property.
The smaller of these examples has $n=7$ vertices.
Avis, Miyata and Moriyama(2009) constructed for each $d \ge 4$ and $n \ge d+2$,
a $d$-polytope $P$ with $n$ vertices which has a polytopal digraph 
which is an acyclic USO that satisfies the Holt-Klee property,
but does not satisfy the shelling property.
The construction was based on a minimal such example, which has $d=4$ and $n=6$.
In this paper we explore the shelling condition further.
First we give an apparently stronger definition of the shelling property, 
which we then prove is 
equivalent to the original definition. Using this stronger
condition we are able to give a more general construction of such families.
In particular, we show that given any
4-dimensional polytope $P$ with $n_0$ vertices whose unique sink is simple, 
we can extend $P$ for any $d \ge 4$ and $n \ge n_0 + d-4$ to a $d$-polytope with these
properties that has $n$ vertices.
Finally we investigate the strength of the shelling condition
for $d$-crosspolytopes, 
for which Develin (2004) has given a complete characterization of LP orientations.
\end{quote}

\begin{quote}
{\bf Keywords: polytopal digraphs, simplex method, shellability, polytopes }
\end{quote}
\vspace{5mm}

\section{Introduction} 

Let $P$ be a $d$-dimensional convex polytope ($d$-polytope) in $\Re^d$.
We assume that the reader is familiar with polytopes,
a standard reference being \cite{Ziegler}.
The vertices and extremal
edges of $P$ form an (abstract) undirected graph called the skeleton of $P$.
A {\it polytopal digraph} $G(P)$ is formed
by orienting each edge of the skeleton of $P$ in some manner.
In the paper, when we refer to a polytopal digraph $G(P)$ we shall
mean the pair of both the digraph and the polytope $P$ itself,
not just the abstract digraph.

We can distinguish four properties that the digraph $G(P)$
may have, each of which has been well studied:
\begin{itemize}
\item 
{\it Acyclicity}: 
$G(P)$ has no directed cycles.
\item 
{\it Unique sink orientation (USO)}~ (Szab{\'{o}} and Welzl~\cite{SW01}):
Each subdigraph of $G(P)$ induced by a non-empty face of $P$ has a unique source
and a unique sink.
\item 
{\it Holt-Klee property}~(Holt and Klee~\cite{HK}):
$G(P)$ has a unique sink orientation, and
for every $k (\geq 0)$-dimensional face ($k$-face) $H$ of $P$ there are 
$k$ vertex disjoint paths 
from
the unique source to the unique sink of $H$ in the subdigraph
$G(P,H)$ of $G(P)$ induced by $H$.
\item 
{\it LP digraph}:
There is a linear function $f$
and a polytope $P^{'}$ combinatorially equivalent to $P$
such that for each pair of vertices $u$ and $v$ of $P^{'}$ that form
a directed edge $(u,v)$ in $G(P^{'})$, we have
$f(u) < f(v)$. 
(LP digraphs are called polytopal digraphs in Mihalsin and Klee \cite{MK00}.)
\end{itemize}

Interest in polytopal digraphs stems from the fact that the simplex method with
a given pivot rule can be 
viewed as an algorithm for finding a path to the sink in a polytopal digraph
which is an LP digraph.
Research continues on pivot rules for the simplex method
since they leave open the possibility of finding a strongly polynomial time algorithm
for linear programming. 
An understanding of which polytopal digraphs are LP digraphs is therefore
of interest.
The other three properties are necessary
properties for $G(P)$ to be an LP-digraph.
We note here that Williamson Hoke~\cite{Ho88}
has defined a property called \emph{complete unimodality}
which is equivalent to a combination of
acyclicity and unique sink orientation.
LP digraphs
are completely characterized when $d = 2,3$.
Their necessary and sufficient properties in dimension $d = 2$ are
a combination of acyclicity and unique sink orientation,
i.e., complete unimodality,
and those in $d = 3$ are
a combination of acyclicity, unique sink orientation
and the Holt-Klee property~\cite{MK00}.
On the other hand,
no such characterization is known yet for higher dimensions.

Another necessary property 
for $G(P)$ to be an LP digraph
is based on \emph{shelling},
which is one of the fundamental tools of polytope theory.
A formal definition of shelling is given in Section 2.
Let $G(P)$ be a polytopal digraph 
for which the polytope $P$ has $n$ vertices labelled $v_1, v_2, ... , v_n$.
A permutation $r$ of the vertices is a \emph{topological sort} of $G(P)$
if, whenever $(v_i , v_j)$ is a directed edge of $G(P)$, $v_i$ precedes $v_j$
in the permutation $r$.
Let $L(P)$ be the face lattice of $P$.
A polytope $P^*$ whose face lattice is $L(P)$ "turned upside-down" is called
a \emph{combinatorially polar} polytope of $P$.
Combinatorial polarity interchanges vertices of $P$ with facets of $P^*$.
We denote by $r^*$ the facet ordering of $P^*$ corresponding to the vertex ordering 
of $P$ given by $r$.

\begin{itemize}
\item{\it Shelling property}~(Avis and Moriyama~\cite{AM08}):
There exists a topological sort $r$ of $G(P)$ such
that the facets of $P^*$ ordered by $r^*$ are a shelling of $P^*$.
\end{itemize}

Results relating these properties of polytopal digraphs
have been obtained by various authors. 
Figure~\ref{fig:relationship_general} summarizes the relationship among the 4 properties,
where the regions $F,G,H,I,J,X,Y$ are non-empty. For further information, see \cite{AM08}.
\begin{figure}[ht]
\begin{center}
\includegraphics[scale=0.30]{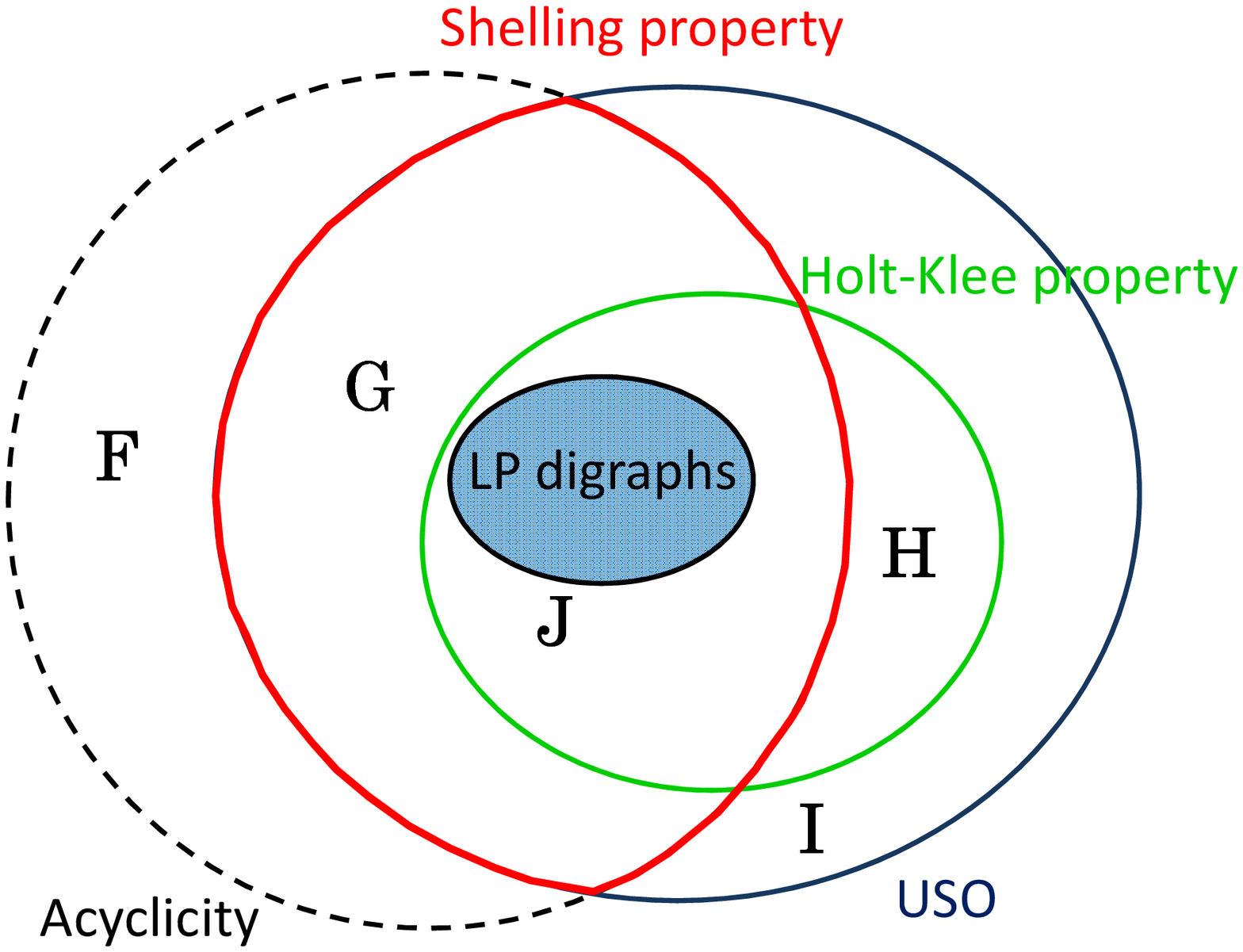}
\includegraphics[scale=0.30]{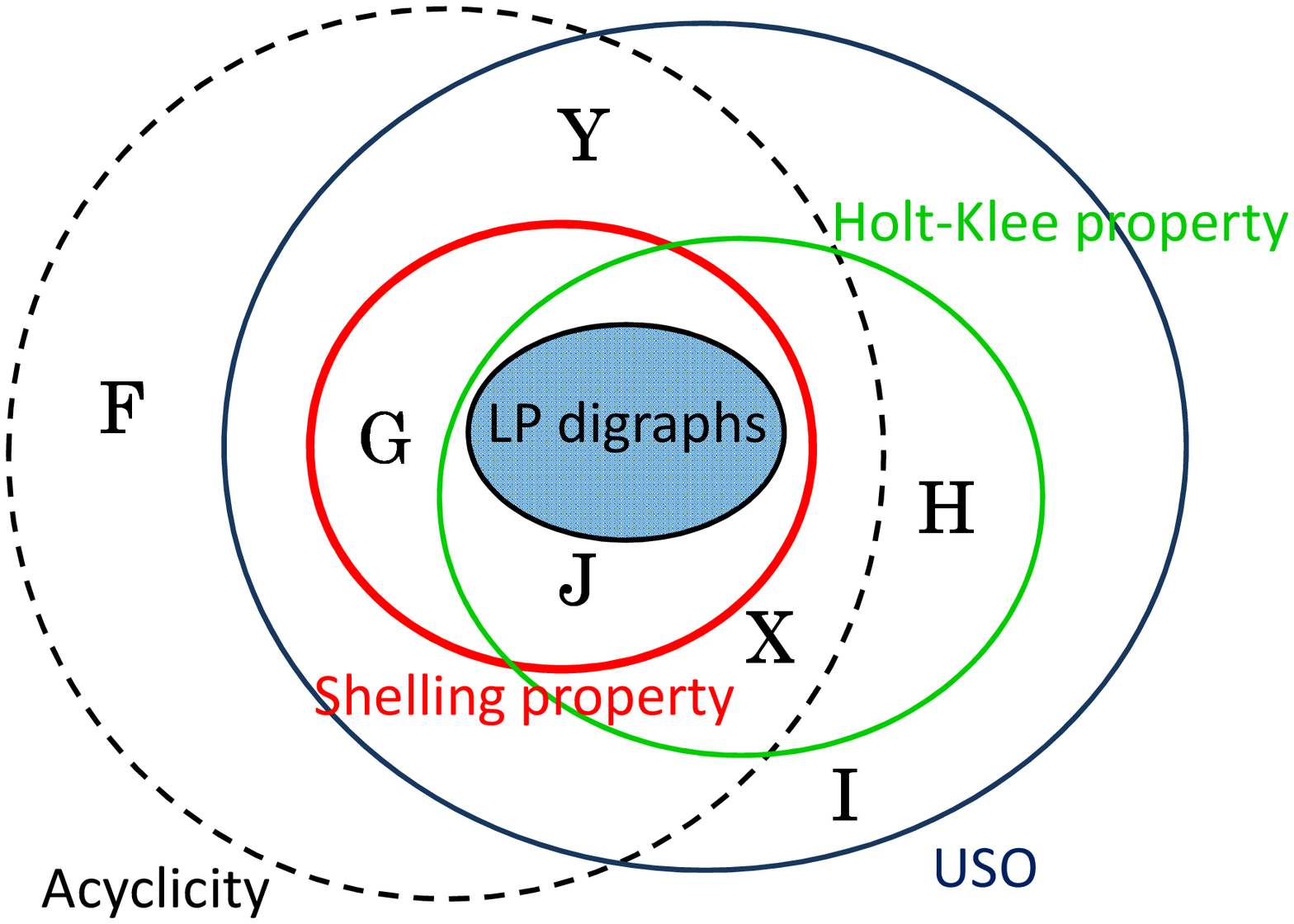}
\end{center}
\caption{The Relationships among the 4 properties when $P$ is simple[left] and general[right]}
\label{fig:relationship_general}
\end{figure}
\\
When we restrict ourselves to the case $d \leq 3$, the region $X$ is empty
since the shelling property is characterized by acyclic USOs for $d \leq 3$~\cite{AM08}.
The region $X$ is also empty for simple polytopes because the shelling property is equivalent to the acyclic USO property
for simple polytopes \cite{AM08,Ho88}.
On the other hand, two examples of non-simple polytopal digraphs in the region $X$ had been known for $d=4$:
Develin's example~\cite{Mi06} is a polytopal digraph
on the skeleton of a $4$-dimensional crosspolytope with eight vertices,
and the example proposed by Avis and Moriyama~\cite{AM08}
is a polytopal digraph a $4$-polytope with seven vertices.

The existence of the non-empty region $X$ shows the
importance of the shelling property, namely that
there exist polytopal digraphs
satisfying the three existing necessary properties
for LP digraphs, but not the shelling property.
This motivates the following definition.
\begin{defn}
A polytopal graph $G(P)$ is called 
an {\it $X$-type graph} if it is an acyclic USO satisfying the Holt-Klee property,
but not the shelling property.
\end{defn}

In our paper in proceedings of 6th Japanese-Hungarian Symposium on Discrete Mathematics and
Its Applications\cite{AMM09}, we proved the following result.
\begin{thm}\label{thm:main_2}
For every $d \geq 4$ and every $n \geq d+2$,
there exists a $d$-polytope $P$ with $n$ vertices
which has an $X$-type  graph.
\end{thm}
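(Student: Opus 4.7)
The plan is inductive. We start from a minimal X-type example $P_0$, namely the $4$-polytope on six vertices exhibited in \cite{AMM09}, and define two operations on polytopal digraphs, each of which preserves the X-type property. The first, a \emph{pyramid operation}, raises the dimension by one while adding a single vertex. The second, a \emph{vertex-addition operation}, keeps the dimension fixed at $4$ and adds a single vertex. Starting from $(d,n) = (4,6)$, we apply the vertex-addition operation $n - (d+2)$ times and then apply the pyramid operation $d-4$ times to reach any target $(d,n)$ with $d \ge 4$ and $n \ge d+2$.

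For the pyramid operation, given a $d$-polytope $P$ with X-type digraph $G(P)$, let $Q = \operatorname{pyr}(P)$ with apex $a$, and orient every new edge $\{a,v\}$ away from $a$. Then $G(Q)$ is acyclic since $a$ is a source and the digraph on $P$ is unchanged. Every face of $Q$ is either a face of $P$ or a pyramid over a face $F'$ of $P$; in the second case $a$ is the unique source of the pyramid and the unique sink of $F'$ is again a unique sink, while the Holt-Klee paths in $F'$ can be extended through $a$ to give $\dim F$ vertex-disjoint source-sink paths. For the vertex-addition operation, we place a new vertex $w$ beyond a carefully chosen facet of a $4$-polytope $P$ (for instance a facet containing the unique sink), and orient every edge at $w$ outward, so that $w$ becomes the unique source of every face of the new polytope that contains it. Acyclicity, USO, and Holt-Klee for the extended polytope then follow from those of $P$ together with the standard beyond-construction description of its face lattice.

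The main obstacle is showing that the shelling property continues to fail after each operation, and this is where the stronger reformulation of the shelling property proved earlier in the paper is essential. The failure of shelling in $P_0$ is recorded as a forbidden initial segment of any facet ordering of $P_0^*$ induced by a topological sort of $G(P_0)$. Under the pyramid operation, the polar of $Q$ is combinatorially a bipyramid over $P^*$, and any topological sort of $G(Q)$ necessarily restricts to a topological sort of $G(P)$; one then argues that the corresponding initial segment of the induced facet ordering of $Q^*$ still contains the forbidden configuration, since the apex only contributes a single facet to $Q^*$ that cannot repair the obstruction. The vertex-addition operation is handled in parallel, using that the polar of the extended polytope is obtained from $P^*$ by truncation at the vertex corresponding to the chosen facet, so the local obstruction survives in $P^*$ as an induced substructure of the new polar. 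In both cases, the stronger shelling condition localises the obstruction to a small initial segment of the facet ordering, which is exactly what makes it possible to carry the obstruction through these global constructions; verifying this transfer in each of the two operations is the technically delicate part of the argument.
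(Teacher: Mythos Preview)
Your overall architecture is the same as the paper's: start from the minimal $4$-dimensional example on six vertices, iterate one operation to increase $n$ in dimension~$4$, and then iterate a pyramid operation to increase $d$. But several of the concrete choices you make are wrong, and the shelling argument is not carried out.

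First, the polar of a pyramid is again a pyramid, not a bipyramid: $\operatorname{pyr}(P)^*\cong\operatorname{pyr}(P^*)$. This matters for the shelling step. Second, your two orientation choices are inconsistent with one another. In your pyramid you make the apex a \emph{source}; in the paper the apex is made a \emph{sink}, so that $F'(v)$ is the last facet in every topological sort and one can argue by contrapositive that a shelling of $\operatorname{pyr}(P)^*$ restricts (by deleting the apex of $\operatorname{pyr}(P^*)$ from every facet) to a shelling of $P^*$. Your source-apex convention is dualisable in principle, but then your ``forbidden initial segment'' sketch does not work as stated, because the very first facet in any topological sort of $G(Q)$ is now $F'(a)=P^*$, which does not contain the local obstruction at all; the obstruction lives in the later pyramidal facets and you have not explained why it persists there. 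Third, for the $4$-dimensional step you stack a vertex $w$ beyond a facet and orient all edges at $w$ outward. You then say ``for instance a facet containing the unique sink''; that choice breaks USO on the whole polytope, since the old global source is not adjacent to $w$ and hence the new digraph has two sources. Even after fixing the facet choice, the Holt--Klee verification for the new pyramidal faces is not ``standard'' and needs a real argument. The paper avoids all this by using the opposite construction: it \emph{truncates} at the unique sink (which is assumed simple), producing two new vertices $u_1,u_2$ that become the new sinks, and checks USO and Holt--Klee case by case over the five affected facets.

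Finally, the heart of the matter---preservation of non-shelling---is only asserted in your proposal. The paper handles both operations by proving the contrapositive: assuming the new digraph satisfies the shelling property, it explicitly constructs a shelling of $P^*$, using the equivalence with the stronger definition so that one may work with a single convenient topological sort (placing the new vertices last). For the truncation this requires a separate lemma (Lemma~\ref{lem:line}) guaranteeing shellings of a polytope beginning or ending with two prescribed adjacent facets, and a somewhat delicate analysis of $F(v_3)$ and $F(v_4)$. None of this is visible in your sketch; the phrase ``the local obstruction survives in $P^*$ as an induced substructure of the new polar'' is not an argument, and in the truncation case the polar changes by more than a truncation at a single vertex.
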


In this paper, we improve the result in the proceedings version as follows.
First we reconsider the definition of the shelling property and give a new definition, which
looks stronger than the original definition.
\begin{defn}\label{defn:main}~\mbox{{\rm (a new definition of the shelling property)}}\\
$G(P)$ is acyclic, and 
the facets of $P^*$ ordered by $r^*$ are a shelling of $P^*$ for any topological sort $r$ of $G(P)$.
\end{defn}
Actually, this property will be proved to be equivalent to the original shelling property.
Using the equivalence of the original definition and the new one, 
we give general constructions of families of $X$-type graphs, which generalize constructions in \cite{AMM09}.
In particular we prove the following:
\begin{thm}\label{thm:main_3}
Let $P_0$ be a 4-polytope with $n_0$ vertices which have an $X$-type graph $G(P_0)$
whose unique sink is simple.
Then 
$P_0$ can be extended to
a $d$-polytope $P$ with $n$ vertices
having an $X$-type graph for every $d \geq 4$ and every $n \geq n_0+d-4$.
\end{thm}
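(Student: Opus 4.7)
The plan is to build $P$ and $G(P)$ from $P_0$ by iteratively applying two extension operations, each preserving the $X$-type condition. The key preliminary, to be established earlier in the paper, is the equivalence of Definition~\ref{defn:main} with the original definition of the shelling property: this allows failure of shelling to be certified by a \emph{single} topological sort $r$ of $G(P)$ whose image $r^*$ is not a shelling of $P^*$, and it is this single-witness formulation that lifts cleanly through the constructions.

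The first operation is a pyramid lift. Given a $d$-polytope $P_0$ carrying an $X$-type graph with simple sink $s$, set $P' = \mathrm{pyr}(P_0)$ with apex $v$ and orient every new edge outward from $v$. Acyclicity is immediate, and every face of $P'$ is either a face of $P_0$ or the pyramid over such a face with $v$ as its unique source, so the USO property passes. Holt-Klee in $\mathrm{pyr}(F)$ is verified by combining the direct edge $v \to \mathrm{sink}(F)$ with the $\dim F$ vertex-disjoint paths inside $F$ rerouted through $v$ via distinct neighbours of $\mathrm{source}(F)$. The sink $s$ gains exactly one new neighbour, raising its degree from $d$ to $d+1 = \dim P'$, so simplicity of the sink is preserved. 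For the shelling obstruction, use the identity $(\mathrm{pyr}(P_0))^{*} = \mathrm{pyr}(P_0^{*})$: if $r_0$ is a topological sort of $G(P_0)$ whose image $r_0^{*}$ is not a shelling of $P_0^{*}$, then $r = (v, r_0)$ orders the facets of $P'^{*}$ as $(P_0^{*}, \mathrm{pyr}(f_{r_0(1)}), \dots, \mathrm{pyr}(f_{r_0(n_0)}))$, and a direct compatibility check shows this is a shelling of $\mathrm{pyr}(P_0^{*})$ if and only if $r_0^{*}$ shells $P_0^{*}$. Iterating this lift $d-4$ times raises the dimension from $4$ to $d$ while adding $d-4$ vertices.

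The second operation adds a single vertex while keeping the dimension equal to $4$. The intent is a \emph{local} modification in a neighbourhood of the source of $G(P_0)$, leaving the region near the sink (and hence its simplicity) untouched, which produces a new 4-polytope whose digraph has the added vertex as the new source, with all old edges retained and oriented as before. The USO and Holt-Klee checks then reduce to the star of the new vertex, which can be arranged to have a tractable simplicial structure. Any topological sort of the enlarged digraph begins with the new source and then restricts to a topological sort of $G(P_0)$; the shelling obstruction on $P_0^{*}$ transfers through the combinatorial polar, provided the facets of $P^{*}$ created by the addition appear only at the start of $r^{*}$ and so cannot repair the obstruction that manifests later in the ordering.

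Combining the two: starting from $P_0$, apply the second operation $n - n_0 - (d-4)$ times to reach a 4-polytope with $n - (d-4)$ vertices carrying an $X$-type graph with simple sink, then apply the first operation $d - 4$ times to reach dimension $d$ with $n$ vertices. The main technical obstacle is the second operation: specifying a vertex-addition construction that (a) realises as a genuine convex 4-polytope, (b) preserves Holt-Klee on every new face, and (c) cleanly transfers the shelling obstruction despite the polar of the extended polytope no longer being a pyramid. Verifying (c) is the delicate part and will likely dictate the precise geometric placement of the new vertex; the equivalence of the two shelling definitions is what makes this verification feasible, because it reduces the task to producing one bad topological sort rather than ruling out all good ones.
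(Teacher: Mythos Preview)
Your high-level architecture matches the paper exactly: first apply a vertex-adding operation in dimension~$4$ the required number of times, then apply the pyramid operation $d-4$ times. Your pyramid is the paper's with all orientations reversed (apex as source rather than sink); since the reverse of a shelling is a shelling, this is harmless, and your Holt--Klee verification, though phrased awkwardly, is essentially the paper's argument dualised.

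The genuine gap is the second operation. You leave it unspecified (``a local modification in a neighbourhood of the source'') and explicitly flag it as the main technical obstacle --- but this is precisely where the entire content of the theorem lives. More tellingly, your proposed modification is near the \emph{source}, which never touches the hypothesis that the \emph{sink} is simple; you track simplicity of the sink through both operations without ever using it. This is a sign that the mechanism is wrong. The paper's operation is a \emph{truncation at the simple sink}: one slices off the sink $v$ by a hyperplane through two of its four neighbours $v_3,v_4$ and through interior points $u_1,u_2$ of the edges $vv_1,vv_2$, yielding a net gain of one vertex, with the new sink $u_2$ again simple. The simple-sink hypothesis is exactly what makes this cut well-defined and keeps the local combinatorics controllable (four incident facets, explicitly enumerable after truncation).

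Verifying that truncation preserves the $X$-type property is the technical heart. Acyclicity, USO, and Holt--Klee are handled by case analysis on which of $u_1,u_2$ a face contains. For the shelling failure the paper argues the contrapositive: if the truncated digraph had the shelling property, one fixes a topological sort ending $\ldots,v_3,v_4,u_1,u_2$ and shows that replacing the tail by $\ldots,v_3,v_4,v$ gives a shelling of $P^*$. The only non-trivial steps are at $F(v_3),F(v_4),F(v)$, and these are handled by an auxiliary lemma: any polytope has a shelling ending with two prescribed adjacent facets. Your source-side proposal offers no analogue of this analysis, and your transfer argument (``the new facets appear only at the start of $r^*$'') ignores that adding a vertex to $P$ alters the facets of $P$, hence alters \emph{all} the facets $F(w)$ of $P^*$, not just the one corresponding to the new vertex.
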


Next, we investigate orientations of $d$-crosspolytope, motivated by results of Develin~\cite{Mi06}.
He introduced a very useful notion called pair sequences, by which we can characterize LP orientations and acyclic USOs
of $d$-crosspolytope. In particular, from his results, we see that
LP orientations of $d$-crosspolytope can completely be characterized by the shelling property.
The relationship among the $4$ properties of LP digraphs is summarized as follows for $d$-crosspolytope ($d \geq 4$),
where the regions $K,L,M$ are non-empty. 
\begin{figure}[ht]
\begin{center}
\includegraphics[scale=0.30,clip]{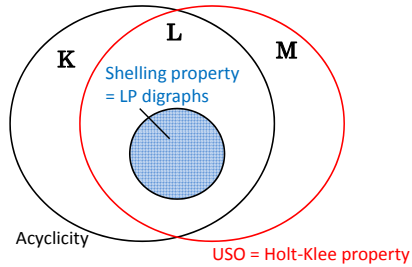}
\end{center}
\caption{The relationship among the 4 properties for $d$-crosspolytope}
\label{fig:crosspoltope_relationship}
\end{figure}

Using a characterization of LP orientations of $d$-crosspolytope by Develin~\cite{Mi06}, we can estimate quantitatively
how strong the shelling property is, compared with the other 3 properties in the case of $d$-crosspolytope (Section 6).

\section{Preliminaries}
We use the definition of shelling given in Ziegler \cite[Definition 8.1]{Ziegler}
which is slightly more restrictive than
the one used by Brugesser and Mani \cite{ BM71}.
Let $P$ be a $d$-polytope in $\Re^d$.
A \emph{shelling} of $P$ is a linear ordering $F_1, F_2, \cdots, F_s$
of the facets of $P$ such that
either the facets are points,
or it satisfies the following conditions \cite[Definition 8.1]{Ziegler}:
\begin{quote}
(i) the first facet $F_1$ has a linear ordering of its facets which is a shelling of $F_1$.\\
(ii) For $1 < j \leq m$
the intersection of the facet $F_j$ with the previous facets
is non-empty and is a beginning segment of a shelling of $F_j$, that is,
$$
F_j \cap \bigcup_{i {=} 1}^{j-1}{F_i}
= G_1 \cup G_2 \cup \cdots \cup G_r
$$
for some shelling $G_1, G_2, \cdots, G_r, \cdots, G_t$ of $F_j$,
and $1 \leq r \leq t$.
(In particular this requires that
{\em all maximal faces included in $F_j \cap \bigcup_{i {=} 1}^{j-1}{F_i}$
have the same dimension $d-2$}.)
\end{quote}
Any polytope has at least one shelling
because of the existence of \emph{line shellings}~\cite{BM71}, described below.
Hence the condition (i) is in fact redundant \cite[Remark 8.3 (i)]{Ziegler}.


Let $P$ be a $d$-polytope with $m$ facets in $\Re^d$.
A directed straight line $L$ 
that intersects the interior of $P$ and
the affine hulls of the facets of $P$ at distinct points is called
\emph{generic} with respect to $P$.
We choose a generic line $L$ and label a point
interior to $P$ on $L$ as $x$. 
Starting at $x$, we number consecutively the intersection points along $L$ with
the affine hulls of facets as $x_1, x_2 , ... , x_m$, wrapping around at infinity,
as in Figure \ref{fig:lineShelling}.
The ordering of the corresponding facets of $P$
is the \emph{line shelling} of $P$ generated by $L$.
Every line shelling is a shelling of $P$ (see, e.g., \cite{Ziegler}).
\begin{figure}[ht]
\begin{center}
\includegraphics[scale=0.40,clip]{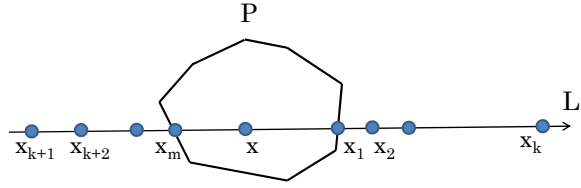}
\end{center}
\caption{The intersection points along a directed straight line $L$}
\label{fig:lineShelling}
\end{figure}
\\
\textbf{Notation:}

In this paper we will sometimes identify a face of $P$ with the set of extreme points of $P$ that it contains
when it does not cause confusion.
The following binary operators will be used when we use such a representation.
Let $f_1,f_2$ be finite sets representing a face of a polytope $P$ and that of a polytope $Q$ respectively.
We will write $f_1 \equiv f_2$ if $f_1$ and $f_2$ coincide as sets
and in addition have same face structures.
That is, 
\begin{align*}
f_1 \equiv f_2 \Leftrightarrow
&\text{ $\exists v_1,..,v_n$ s.t. $f_1=f_2=\{v_1,...,v_n \}$ and } \\
&\text{ $\max \{ f_1 \cap F \mid F$ is a facet of $P \} = \max \{ f_2 \cap G \mid G$ is a facet of $Q \}$, } \\
&\text{where $\max S$ denotes the set of all maximal subsets of $S$ for a set $S$.}
\end{align*}
We will use symbols $\cap, \cup$
to denote the intersection of facets and the union of facets respectively.
On the other hand, we will use $\Cap$ and $\Cup$ to denote
the intersection operator and union operator for finite sets.

\section{Examples of 4-polytopes with X-type graphs}

We showed in Theorem 2 of \cite{AMM09}
an example of a polytope $\Lambda$ in dimension $4$ and with $6$ vertices
which admits an X-type graph $G(\Lambda)$,
and constructed a single family of X-type graphs
based on $G(\Lambda)$ in Theorem 3 of \cite{AMM09}.
The unique sink of $G(\Lambda)$ 
is a simple vertex of the $4$-polytope $\Lambda$.
In this paper we give a more general construction,
which can be applied to
an X-type graph of any $4$-polytope $P_0$ 
with the unique sink which is a simple vertex of $P_0$.
Each new X-type graph therefore extends to a new family.
In this section, we introduce one such X-type graph,
$G(\Omega)$, as shown in Figure \ref{fig:4-polytope_v8f10}.
\begin{figure}[ht]
\begin{center}
\includegraphics[scale=0.35]{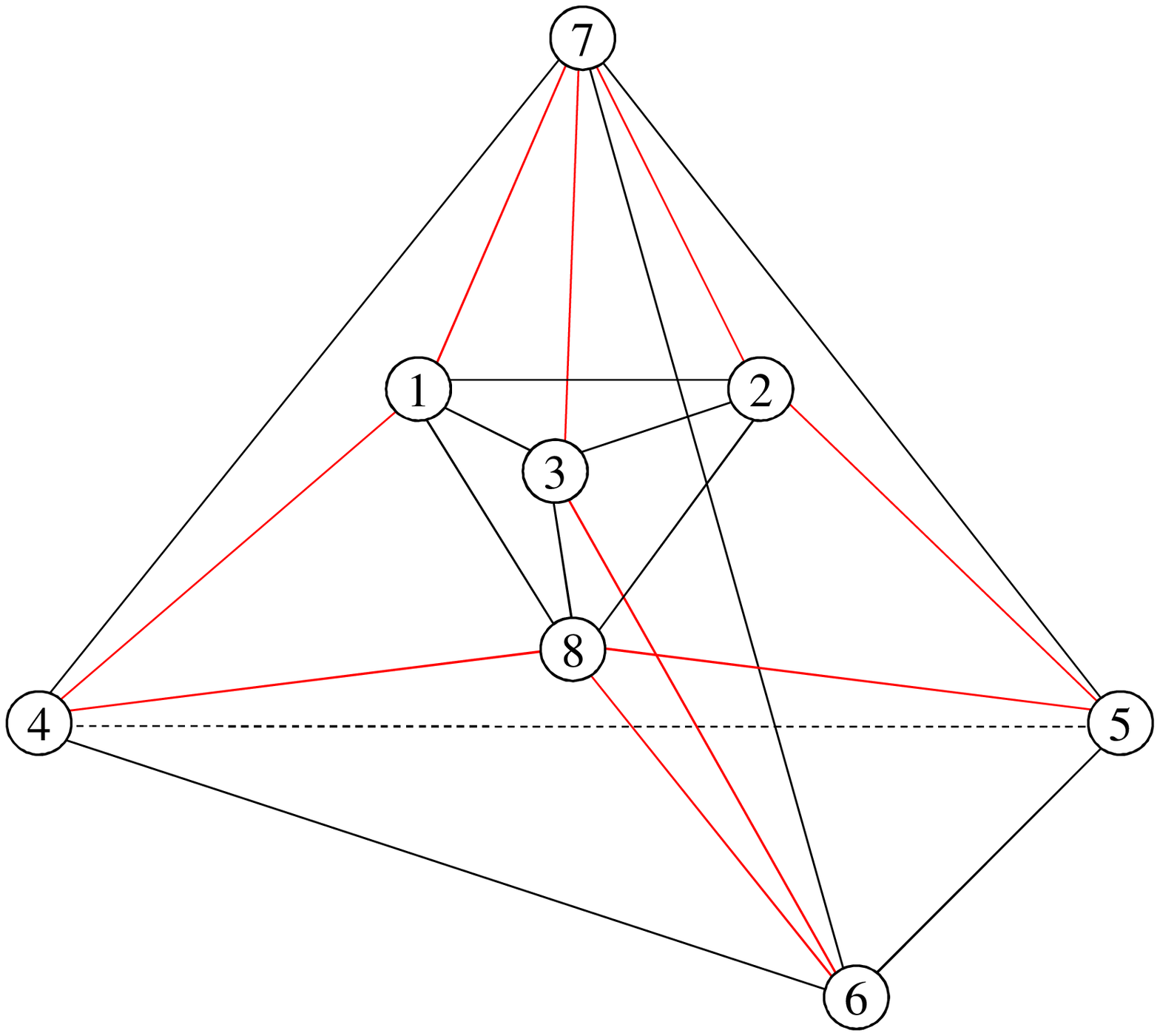}
\includegraphics[scale=0.35]{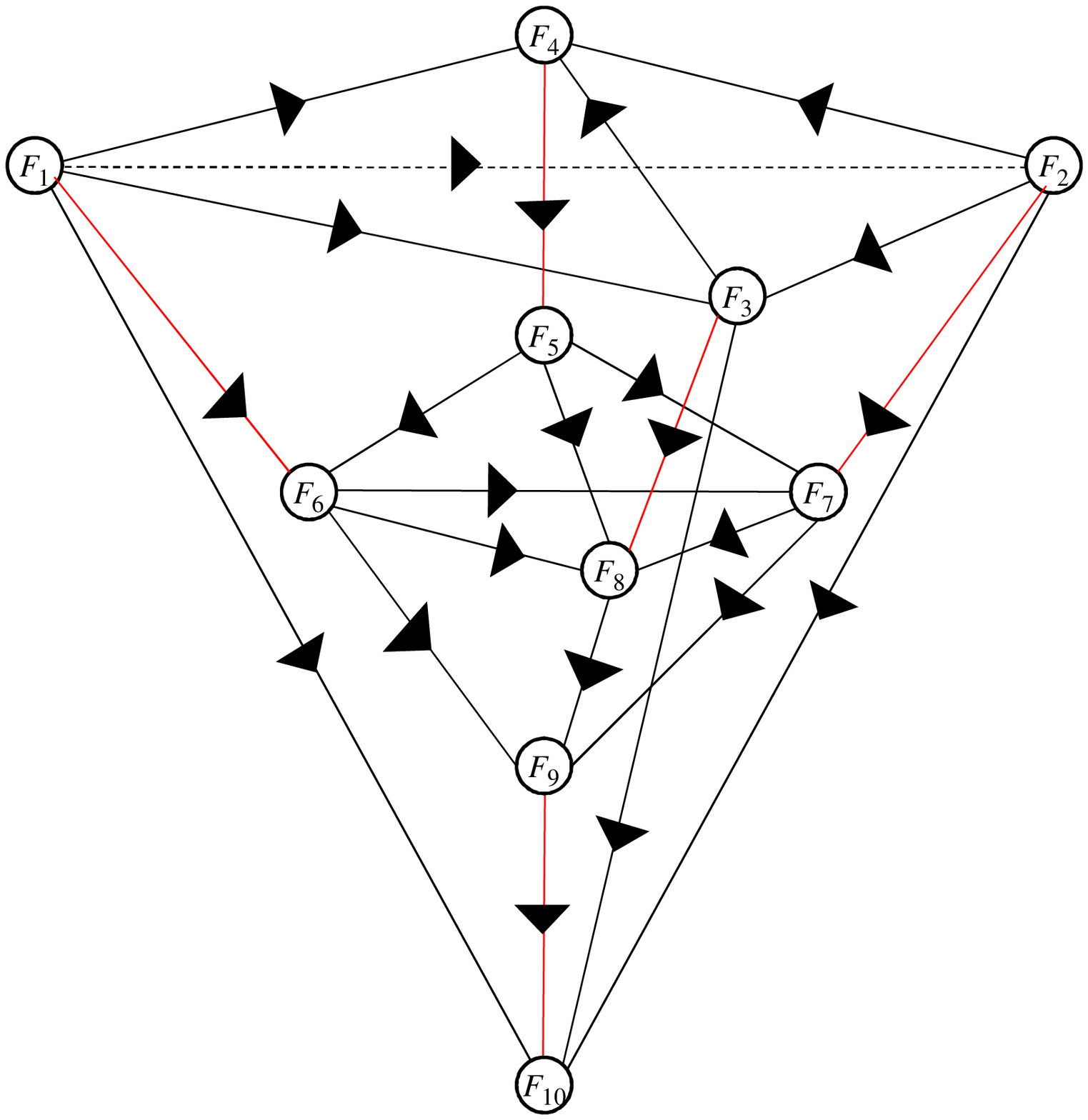}
\end{center}
\caption{The combinatorial polar $\Omega^*$ of a $4$-polytope $\Omega$
with $X$-type graph $G(\Omega)$}
\label{fig:4-polytope_v8f10}
\end{figure}

Let $\Omega^*$ be the $4$-dimensional polytope
with eight vertices $1,2,...,8$ and ten facets 
facets $F_1, F_2, ..., F_{10}$
shown in Figure \ref{fig:4-polytope_v8f10}.
The coordinates of the eight vertices
are given in Table \ref{table:coordinates_v8f10}
and the supporting hyperplanes of the ten facets of $\Omega^*$ are given as
$a_1 x_1 + a_2 x_2 + a_3 x_3 + a_4 x_4 \leq b$ with the coefficients
in Table \ref{table:Facets_counterExample_v8f10}.
\begin{table}
\begin{center}
{\scriptsize
\begin{tabular}{|| c| c@{}c@{}c@{}c || c| c@{}c@{}c@{}c
|| c| c@{}c@{}c@{}c ||  c| c@{}c@{}c@{}c || } \hline
vertex & $( x_1,$ & $x_2,$ & $x_3,$ & $x_4 )$ & vertex & $( x_1,$ & $x_2,$ & $x_3,$ & $x_4 )$
	& vertex & $( x_1,$ & $x_2,$ & $x_3,$ & $x_4 )$ & vertex & $( x_1,$ & $x_2,$ & $x_3,$ & $x_4 )$ \\ \hline \hline
  $1$ &  $( -2,$ & $ 1,$ & $ 1,$ & $  1 )$
& $2$ &  $( 2,$ & $ 1,$ & $  1,$ & $  1 )$
& $3$ &  $( 0,$ & $  -2,$ & $ 1,$ & $ 1 )$
& $4$ &  $( -4,$ & $ 2,$ & $  -2,$ & $  -1 )$ \\ \hline
  $5$ &  $(  4,$ & $  2,$ & $ -2,$ & $  -1 )$
& $6$ &  $( 0,$ & $ -4,$ & $ -2,$ & $  -1 )$
& $7$ &  $( 0,$ & $ 0,$ & $ 2,$ & $  -1 )$
& $8$ &  $( 0,$ & $ 0,$ & $ -1,$ & $ 1 )$ \\ \hline
\end{tabular}
}
\caption{The coordinates of the eight vertices
of the $4$-dimensional polytope $\Omega^*$}
\label{table:coordinates_v8f10}
\end{center}
\end{table}

\begin{table}
\begin{center}
{\scriptsize
\begin{tabular}{|| c | c | c@{}c@{}c@{}c@{}c || c | c | c@{}c@{}c@{}c@{}c ||} \hline
facet & vertices  & $( a_1,$ & $ a_2,$ & $ a_3,$ & $a_4,$ & $ b )$
& facet & vertices & $( a_1,$ & $ a_2,$ & $ a_3,$ & $a_4,$ & $ b )$ \\ \hline \hline
  $F_{1}$ & 1, 2, 4, 5, 7                
	& $( 0,$ & $ 4,$ & $ 2,$ & $-1,$ & $ 5 )$ 
& $F_{2}$ & 1, 3, 4, 6, 7                
	& $( -3,$ & $ -2,$ & $ 2,$ & $-1,$ & $ 5 )$  \\ \hline 
  $F_{3}$ & 2, 3, 5, 6, 7                
	& $( 3,$ & $ -2,$ & $ 2,$ & $ -1,$ & $ 5 )$ 
& $F_{4}$ & 1, 2, 3, 7                   
	& $( 0,$ & $ 0,$ & $ 2,$ & $ 1,$ & $ 3 )$  \\ \hline 
  $F_{5}$ & 1, 2, 3, 8                   
	& $( 0,$ & $ 0,$ & $ 0,$ & $ 1,$ & $ 1 )$
& $F_{6}$ & 1, 2, 4, 5, 8                
	& $( 0,$ & $ 4,$ & $-2,$ & $ 5,$ & $ 7 )$ \\ \hline
  $F_{7}$ & 1, 3, 4, 6, 8                
	& $(-3,$ & $-2,$ & $-2,$ & $ 5,$ & $ 7 )$ 
& $F_{8}$ & 2, 3, 5, 6, 8                
	& $( 3,$ & $-2,$ & $-2,$ & $ 5,$ & $ 7 )$  \\ \hline 
  $F_{9}$ & 4, 5, 6, 8                   
	& $( 0,$ & $ 0,$ & $-2,$ & $ 1,$ & $ 3 )$ 
& $F_{10}$ & 4, 5, 6, 7                  
	& $( 0,$ & $ 0,$ & $ 0,$ & $-1,$ & $ 1 )$  \\ \hline 
\end{tabular}
}
\caption{The ten facets of the $4$-dimensional polytope $\Omega^*$
in Figure \ref{fig:4-polytope_v8f10}
and the coefficients of their supporting hyperplanes
for the coordinates of the eight vertices in Table \ref{table:coordinates_v8f10}}
\label{table:Facets_counterExample_v8f10}
\end{center}
\end{table}

The correctness of this $V$ and $H$-representation can be checked
by standard software such as $cdd$~\cite{cdd}, $lrs$~\cite{lrs} or PORTA~\cite{porta}.
Let  $\Omega$ be a combinatorial polar of $\Omega^*$.
A polytopal digraph $G(\Omega)$ on $\Omega$ is also shown in
Figure \ref{fig:4-polytope_v8f10}.
We have the following result.
\begin{prop} \label{prop:X-typeGraph_newExample}
$G(\Omega)$ is an acyclic USO that satisfies the Holt-Klee property,
but not the shelling property.
\end{prop}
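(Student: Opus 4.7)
The proof splits naturally into three positive claims (acyclicity, USO, Holt--Klee) and one negative claim (failure of shelling), all of which reduce to finite combinatorial checks on the explicit data in Figure~\ref{fig:4-polytope_v8f10} and Table~\ref{table:Facets_counterExample_v8f10}.

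For the positive side, I would argue by direct inspection of the digraph $G(\Omega)$. \emph{Acyclicity} is shown by producing any linear ordering of the ten vertices of $\Omega$ (dual to $F_1,\ldots,F_{10}$) consistent with every drawn arc. For \emph{USO}, I would enumerate the non-empty faces of $\Omega$ --- by combinatorial polarity these are the faces of $\Omega^*$ of complementary dimension and can be read directly from the vertex/facet incidence data --- and verify for each one that the induced subdigraph of $G(\Omega)$ has a unique source and a unique sink. The \emph{Holt--Klee} property imposes non-trivial content only on faces of dimension $k\geq 2$; for each such face $H$ I would exhibit $k$ internally vertex-disjoint source-to-sink paths in $G(\Omega,H)$. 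The only delicate instances are the $3$-faces (duals to the eight vertices of $\Omega^*$) and $\Omega$ itself, each handled by explicit path construction from the figure.

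For the negative claim, I would produce a specific topological sort $r$ of $G(\Omega)$ and show that the induced facet ordering $r^*$ of $\Omega^*$ fails to be a shelling. Since the facets of $\Omega^*$ are $3$-dimensional, the shelling condition at step $j$ requires
$$
F_{i_j} \cap \bigcup_{k=1}^{j-1} F_{i_k}
$$
to be a pure $2$-dimensional subcomplex of $\partial F_{i_j}$ that forms a beginning segment of some shelling of this $2$-sphere. The plan is to locate a small index $j$ at which this intersection already contains a maximal face of dimension strictly less than $d-2 = 2$, namely an isolated vertex or edge of $F_{i_j}$, directly violating the parenthetical purity clause of clause (ii). Inspection of Table~\ref{table:Facets_counterExample_v8f10} shows that several pairs of facets of $\Omega^*$ meet in low-dimensional pieces (for instance $F_5 \cap F_{10} = \emptyset$ while $F_4 \cap F_5 = \{1,2,3\}$ is a triangle), so such a $j$ is readily identifiable once $r$ is chosen. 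By the apparently stronger (but, as the paper shows, equivalent) Definition~\ref{defn:main}, exhibiting one such bad $r$ suffices.

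The main obstacle is pinning down the topological sort $r$ witnessing the shelling failure together with the precise step $j$ and purity or connectivity violation. Once $r$ is fixed, verifying the violation reduces to reading off pairwise intersections of at most $j-1$ facets from Table~\ref{table:Facets_counterExample_v8f10}, which is routine bookkeeping. The arcs in Figure~\ref{fig:4-polytope_v8f10} have been designed so that every topological sort of $G(\Omega)$ produces such a bad prefix in $r^*$; confirming this robustness is the combinatorial heart of the proposition.
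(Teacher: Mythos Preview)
Your overall strategy matches the paper's, but you miss the one observation that makes the non-shelling argument clean: $G(\Omega)$ contains the directed Hamiltonian path $F_1 \to F_2 \to \cdots \to F_{10}$, so there is a \emph{unique} topological sort. There is therefore no need to invoke Definition~\ref{defn:main} or to worry about ``robustness over all topological sorts''; you simply check the one ordering $F_1,\ldots,F_{10}$.

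Your diagnosis of the obstruction is also slightly off. At $j=3$ one has $F_3 \cap (F_1 \cup F_2) = \{2,5,7\} \cup \{3,6,7\}$, which is \emph{pure} $2$-dimensional---both maximal faces are triangles---so the parenthetical purity clause is not violated. The obstruction is that these two $2$-faces of $F_3$ meet only in the vertex $7$, hence in neither order do they form a beginning segment of a shelling of the $3$-polytope $F_3$ (the second facet of any shelling must meet the first in a ridge). Your fallback phrase ``or connectivity violation'' covers this, but your primary plan of hunting for a low-dimensional maximal face would not find it.
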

\begin{proof}
In $G(\Omega)$ all edges are directed from smaller index to larger index,
so it satisfies acyclicity.
The facets of the polytope $\Omega$
consist of six triangular prisms and two hexahedra.
It is easy to check that each of these facets and their proper faces
has a unique source and sink
by referring to  Figure \ref{fig:4-polytope_v8f10}.
In the case of dimension two, any acyclic USO digraph satisfies the Holt-Klee property.
There exist three vertex-disjoint paths from the unique source to the unique sink
in each facet, and four vertex disjoint paths
from $F_1$ to $F_{10}$ in $G(\Omega)$
(see Figure \ref{fig:4-polytope_v8f10}).
It follows that the graph $G(\Omega)$ is
an acyclic USO that satisfies the Holt-Klee property.
We prove that the graph $G(\Omega)$
does not satisfy the shelling property.

The following proof is the same as the proof \cite[Proposition 4]{AMM09},
which shows that $G(\Lambda)$ does not satisfy the shelling property.
There is a path $F_1, F_2, ..., F_{10}$
of all the ten vertices of $G(\Omega)$
in order of their indices, and so this ordering
is the unique topological sort of the graph.
By referring to Table \ref{table:Facets_counterExample_v8f10}
and Figure \ref{fig:4-polytope_v8f10}, 
we see that the first three facets of $\Omega^*$ in this order are:
$F_1 = \{ 1, 2, 4, 5, 7 \}$,
$F_2 = \{ 1, 3, 4, 6, 7 \}$, and
$F_3 = \{ 2, 3, 5, 6, 7 \}$.
Therefore $F_3 \cap \bigcup_{i=1}^{2}{F_i}$
is the union of two 2-faces of $\Omega^*$: 
one with vertices $\{ 2, 5, 7 \}$ and one with vertices $\{ 3,6,7 \}$.
These two faces intersect at a single vertex
and so cannot be the beginning of a shelling of $F_3$.
Hence the unique topological sort of $G(\Omega)$
is not a shelling of $\Omega^*$.
This completes the proof.
\end{proof}

We remark that $G(\Omega)$ is not any of the $X$-type graphs $G(\Lambda_{4,n})$ for $n \geq 6$,
which are generated from $G(\Lambda)$ using truncation operations \cite[Proposition 7]{AMM09}.
The polytope $\Lambda_{4,n}$ has $n$ vertices and $n+1$ facets \cite[Lemma 6]{AMM09}.
while $\Omega$ has ten vertices and eight facets.
It follows that $G(\Omega)$ generates another family of $X$-type graphs
different from those based on $G(\Lambda_{4,n})$ for $n \geq 6$.

\section{Equivalence of the definitions of the shelling property}
In this section, we prove that the new definition of the shelling property (Definition~\ref{defn:main}) is equivalent to the original
definition and hence LP digraphs satisfy this condition.
To prove the equivalence, we first prove the following lemma.
\begin{lem}
Let $P$ be a $d$-polytope with vertices $v_1,...,v_n$ and $G(P)$ an acyclic USO of $P$. 
For each vetex $v$ of $P$, let us denote $F(v)$ to describe the facet of $P^*$ corresponding to $v$(, which can be represented by
the set of all facets of $P$ that contain $v$).
Then
\[ F(v_k) \cap \bigcup_{i<k}{F(v_i)} = \bigcup_{j: (v_j, v_k) \in G(P)}{F(v_k) \cap F(v_j)} \]
for $k \leq n$, where $(v_j,v_k)$ denotes the edge directed from $v_j$ to $v_k$.
\end{lem}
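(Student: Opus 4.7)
The plan is to translate the identity via combinatorial polarity into a statement about vertices and faces of $P$, and then exploit the unique-sink structure. I will implicitly use that $v_1,\dots,v_n$ is a topological sort of $G(P)$; without this assumption the index inequality $i<k$ on the left is unrelated to the edge directions on the right, and the lemma is false.

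The inclusion $\supseteq$ is immediate: any $j$ with $(v_j,v_k)\in G(P)$ satisfies $j<k$ by the topological sort, so $F(v_k)\cap F(v_j)$ appears as a term of the union on the left.

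For the harder inclusion $\subseteq$ I would take a point $x$ in the left-hand side, let $G$ be the unique face of $P^*$ whose relative interior contains $x$, and let $G^*$ be the face of $P$ polar to $G$. The inclusion-reversing polar correspondence identifies the facets $F(v)$ of $P^*$ containing $G$ with the vertices $v\in G^*$, so for any $m$ the condition $x\in F(v_m)$ is equivalent to $v_m\in G^*$. In particular $v_k\in G^*$, and by hypothesis there is also some $v_i\in G^*$ with $i<k$. Since the labelling is a topological sort, the minimum-indexed vertex of $G^*$ has no incoming edge from any other vertex of $G^*$ and is therefore a source of $G(P)$ restricted to $G^*$; by USO it is the unique source of that restriction. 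Because $i<k$, the vertex $v_k$ is not this source, so some edge $(v_j,v_k)\in G(P)$ with $v_j\in G^*$ must exist. Then $v_j\in G^*$ gives $x\in F(v_j)$, so $x\in F(v_k)\cap F(v_j)$, as required.

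The main obstacle is the small combinatorial step identifying the source of $G^*$ with its minimum-indexed vertex relative to the topological sort; once that is in place, everything else reduces to invoking the standard polar inclusion-reversing correspondence and unpacking the definitions.
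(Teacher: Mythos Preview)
Your proof is correct and follows essentially the same route as the paper: translate via polarity to a statement about vertices lying on a common face of $P$, then use the USO property on that face to find an in-edge $(v_j,v_k)$ inside it. The only cosmetic difference is that the paper verifies the inclusion at vertices of $P^*$ (so the face of $P$ in question is always a facet $f$), whereas you verify it at arbitrary points of $P^*$ via the face $G^*$ containing the relative-interior carrier; both arguments invoke the unique source of the induced subgraph and the topological-sort assumption (which, as you correctly observe, is implicit in the paper's statement) in the same way.
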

\begin{proof} 
($\supset$) is trivial and we prove ($\subset$) part.

Let $f$ be a facet of $P$ with $F(f) \in F(v_k) \cap \bigcup_{i<k}{F(v_i)}$.
Then 
\[ F(f) \in F(v_k) \cap F(v_{i_0})\]
for some $i_0 < k$, where $F(f)$ is the vertex of $P^*$ corresponding to $f$.
This means 
\[ v_k,v_{i_0} \in f. \]
We assume now that there is no $j \leq k$ s.t. ($v_j, v_k$) is an edge of $G(f)$.
In this case, $v_k$ is a unique source of $f$.
This contradicts to $v_{i_0} \in f$ ($i_0 < k$).
Therefore we can take $j < k$ such that $(v_k, v_j)$ is an edge of $G(f)$ and
\[ F(f) \in \bigcup_{j: (v_j,v_k) \in G(P)}{F(v_k) \cap F(v_j)}. \] 
\end{proof}

Next we prove the equivalence of the two definitions of the shelling property.
\begin{thm}
Let $P$ be a $d$-polytope and $G(P)$ a polytopal digraph of $P$.
Suppose that there is a topological sort $v_1,...,v_n$ of $G(P)$ s.t.
$F(v_1),...,F(v_n)$ is a shelling of $P^*$. 
Then for any topological sort $w_1,...,w_n$ of $G(P)$,
$F(w_1),...,F(w_n)$ is a shelling of $P^*$.
\end{thm}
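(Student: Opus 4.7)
The plan is to exploit a key invariance revealed by the preceding lemma: the right-hand side $\bigcup_{j:(v_j,v_k)\in G(P)}\bigl(F(v_k)\cap F(v_j)\bigr)$ depends only on the vertex $v_k$ and the set of its in-neighbors in $G(P)$, not on how the remaining vertices are linearly ordered. Consequently, the intersection of $F(v_k)$ with the union of all facets appearing before it in any topological sort of $G(P)$ is the very same subset of $F(v_k)$, regardless of which topological sort is used.

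To make the lemma applicable one first needs acyclicity and the unique-source-and-sink property of $G(P)$. Acyclicity is immediate from the existence of a topological sort. The USO property can be obtained from the shelling hypothesis: restricting the given shelling of $P^*$ to the sub-order on facets through a given face of $P^*$ forces a unique first and last facet in that restriction, which translates to the unique source and unique sink on the corresponding face of $P$; alternatively this may be cited from \cite{AM08}. With the lemma in hand, fix an arbitrary topological sort $w_1,\ldots,w_n$ of $G(P)$ and, for each $k$, let $k^*$ be the index with $v_{k^*}=w_k$. Two applications of the lemma, once to each sort at the common vertex $w_k=v_{k^*}$, yield
\[
F(w_k)\cap\bigcup_{i<k}F(w_i)\;=\;\bigcup_{(u,w_k)\in G(P)}\bigl(F(w_k)\cap F(u)\bigr)\;=\;F(v_{k^*})\cap\bigcup_{i<k^*}F(v_i),
\]
where $u$ ranges over in-neighbors of $w_k$ in $G(P)$. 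Since $F(v_1),\ldots,F(v_n)$ is a shelling of $P^*$, the rightmost expression is a beginning segment of some shelling of $F(v_{k^*})=F(w_k)$, hence so is the leftmost expression. This verifies condition (ii) of the shelling definition for the new ordering; condition (i) is automatic because every polytope admits a line shelling.

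The main obstacle I anticipate is the USO step needed to legitimately invoke the lemma. If one is willing to take USO as a standing hypothesis, or to cite that the original shelling property forces USO, then everything else is a direct and symmetric application of the lemma with essentially no further combinatorial content beyond the invariance observation above. The fact that the characterization of the intersection in the lemma is purely local to $v_k$ and its in-neighbors, rather than depending on the order of the other vertices, is the whole engine of the proof.
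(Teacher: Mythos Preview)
Your proposal is correct and follows essentially the same route as the paper: apply the lemma to express $F(w_k)\cap\bigcup_{i<k}F(w_i)$ as the union over in-neighbors, observe this union depends only on the vertex and its in-neighbors (hence is the same for both topological sorts), and then use the shelling hypothesis on the $v$-ordering to conclude. You are in fact more careful than the paper on one point: the lemma requires $G(P)$ to be an acyclic USO, and the theorem statement does not assume USO explicitly, so your remark that USO must be supplied (e.g.\ by citing \cite{AM08}, where the shelling property is shown to imply acyclic USO) fills a gap that the paper's proof leaves implicit.
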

\begin{proof}
By lemma, we have
\[ F(w_i) \cap \bigcup_{j=1}^{i-1}{F(w_j)} = \bigcup_{j: (w_j, w_i) \in G(P)}{F(w_i) \cap F(w_j)}\]
Let $v_k = w_i$. Then the following holds:
\[ \bigcup_{j: (w_j, w_i) \in G(P)}{F(w_i) \cap F(w_j)} = \bigcup_{j: (v_j, v_k) \in G(P)}{F(v_k) \cap F(v_j)}. \]
Since $F(v_1),...,F(v_n)$ is a shelling of $P^*$, the right hand side of the above relation is a beginning segment of some shelling of $F(v_k)=F(w_i)$.
This implies that the left hand side is also a beginning segment of some shelling of $F(w_i)$.
\end{proof}

The equivalence makes it easier to check whether a given polytopal digraph satisfies the shelling property or not, that is,
it suffices to consider one specific topological sort.
\begin{cor}
Let $P$ be a $d$-polytope and $G(P)$ an LP-digraph on $P$. 
For any topological sort $v_1,...,v_n$ of $G(P)$, $F(v_1),...,F(v_n)$ is a shelling of $P^*$.
\end{cor}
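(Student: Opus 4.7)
The plan is to obtain this corollary as an immediate consequence of the equivalence theorem proved just above, combined with the fact that every LP digraph satisfies the original version of the shelling property. The work consists mainly of verifying that the hypotheses of the preceding theorem are met.

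First I would observe that any LP digraph is acyclic: if $f$ is the defining linear function and $(u,v)$ is a directed edge, then $f(u)<f(v)$, so there can be no directed cycle. Moreover, an LP digraph is a unique sink orientation, as recalled in the introduction. Hence the hypotheses needed for the previous lemma (acyclic USO) are satisfied by $G(P)$.

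Second, I would invoke the original shelling property, which is a known necessary condition for $G(P)$ to be an LP digraph (this is the shelling property as introduced by Avis and Moriyama and stated in the introduction). This gives the existence of at least one topological sort $v_1,\dots,v_n$ of $G(P)$ such that $F(v_1),\dots,F(v_n)$ is a shelling of $P^*$.

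Finally, I would apply the equivalence theorem proved immediately above: since there exists a topological sort producing a shelling of $P^*$, every topological sort $w_1,\dots,w_n$ of $G(P)$ also produces a shelling $F(w_1),\dots,F(w_n)$ of $P^*$. This is exactly the statement of the corollary. There is no genuine obstacle here; the only point requiring care is to make explicit that ``$G(P)$ is an LP digraph'' entails both acyclicity (for the lemma) and the original shelling property (to supply the one good topological sort that the equivalence theorem then propagates to all topological sorts).
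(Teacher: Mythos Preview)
Your proposal is correct and matches the paper's intended argument: the corollary is stated in the paper without proof, as an immediate consequence of the equivalence theorem together with the known fact (from \cite{AM08}) that LP digraphs satisfy the original shelling property. Your only addition is spelling out explicitly that LP digraphs are acyclic USOs so that the lemma underlying the equivalence theorem applies, which is a reasonable point of care.
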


\section{Construction of an infinite family of $X$-type graphs}
In this section, we give a proof of Theorem \ref{thm:main_3}.
It is based on two operations. The first, truncation, builds
a $4$-polytope with $n+1$ vertices from one with $n$ vertices.
The second, pyramid, increases the dimension of a polytope and the number
of its vertices by one.
These operations are useful to construct infinite families of polytopes with certain properties.
They are also used in \cite{FMO09} to construct an infinite family of {\it non-HK* oriented matroids}.
Truncation operation considered in \cite{FMO09} is for $3$-polytopes, but we consider truncation operation
for $4$-polytopes, here.
\subsection{Truncation operation}
In this section we describe the first operation, which is
the \emph{truncation} of a $4$-polytope.
\begin{defn}(Truncation of 4-polytopes)\label{df:truncation}
Let $P$ be a $4$-polytope in $\Re^4$
containing a simple vertex $v$,
i.e., a vertex $v$ with exactly four neighbours,
$\{ v_i~:~i = 1,2,\dots ,4 \}$ the vertices adjacent to $v$,
and $\{ u_j~:~ j = 1,2 \}$ points
in the relative interior of  edge $(v, v_j)$.
The {\it truncated polytope} $\mbox{tr} (P, v)$
is a $4$-polytope $P \cap (H \cup H^+)$
where $H$ is the hyperplane determined by $u_1$, $u_2$, $v_3$, $v_4$ 
and $H^+$ is the open half-space of $H$ containing all vertices except $v$,
see Figure \ref{fig:truncation}.
\end{defn}
\begin{figure}[ht]
\begin{center}
\includegraphics[scale=0.35]{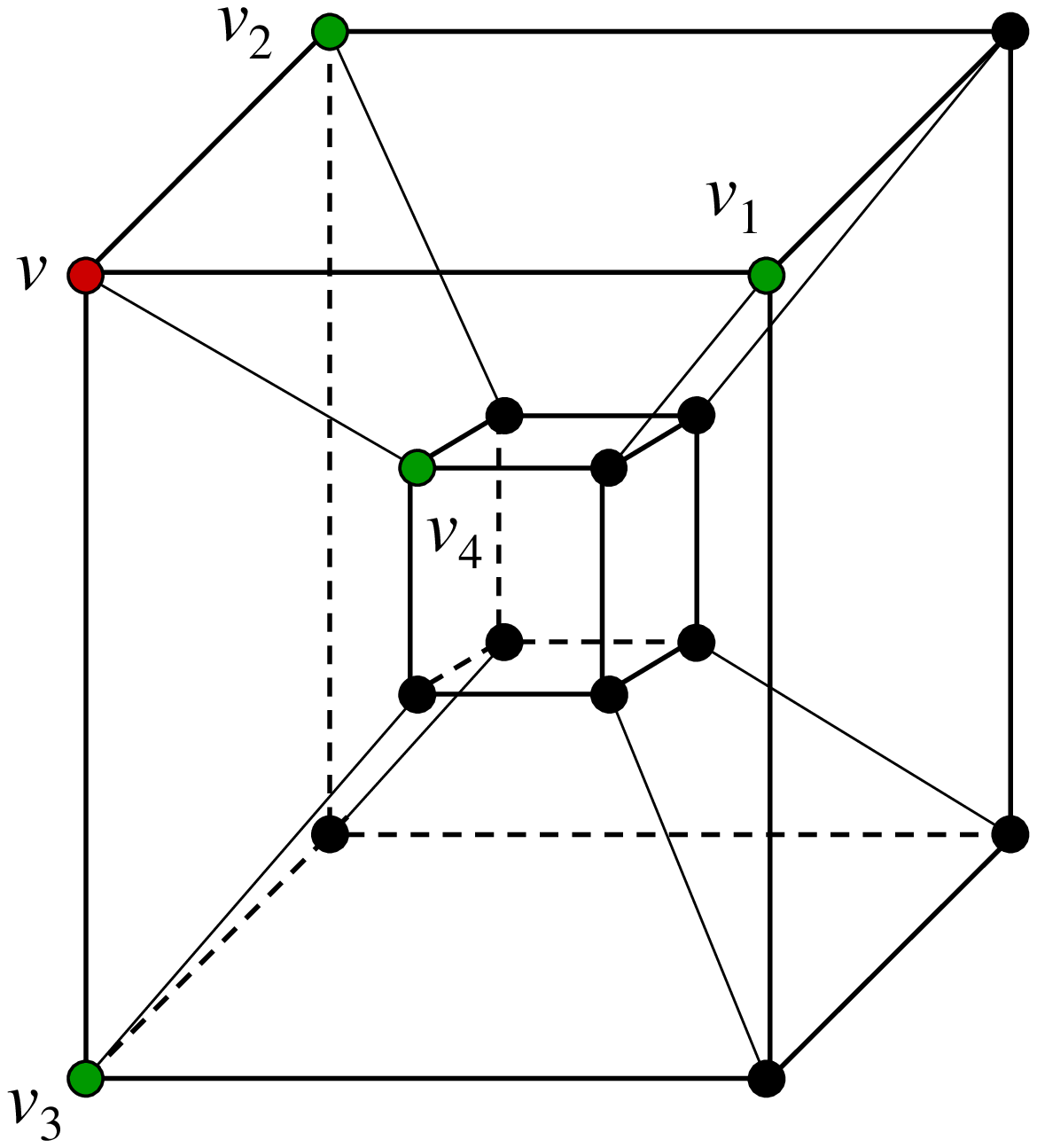}
\includegraphics[scale=0.35]{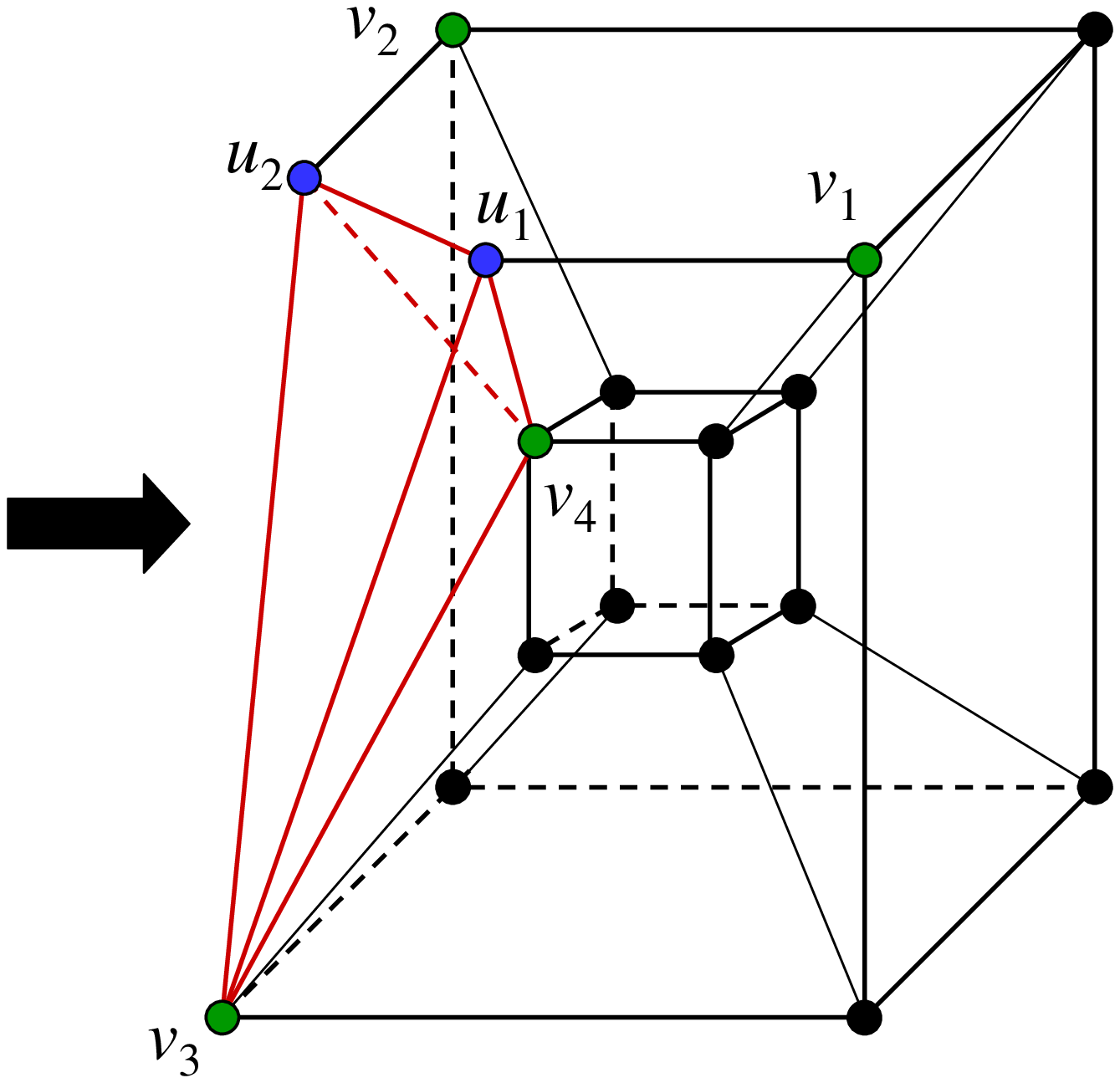}
\end{center}
\caption{A truncation operation for a $4$-polytope}
\label{fig:truncation}
\end{figure}
We define truncation operation for acyclic polytopal digraphs as follows.
\begin{defn}(Truncation of acyclic polytopal digraphs)
Let $G(P)$ be an acyclic polytopal digraph of a $4$-polytope $P$ with a simple vertex $v$
that is also a unique sink.
We define the polytopal digraph $tr(G(P))$ as follows.
We assume without loss of generality that there is no directed path from $v_4$ to $v_3$ in $G(P)$
and define the following orientation of the graph of the truncated polytope $tr(P,v)$.
We do not change the orientations of edges existing in $G(P)$ and orient new edges as follows:
\[ (v_3,v_4) \ (\text{if it is not an edge of $G(P)$}),\]
\[ (v_1,u_1),(v_3,u_1),(v_4,u_1),(v_2,u_2),(v_3,u_2),(v_4,u_2),(u_1,u_2). \] 
\end{defn}
\begin{figure}[ht]
\begin{center}
\includegraphics[scale=0.37]{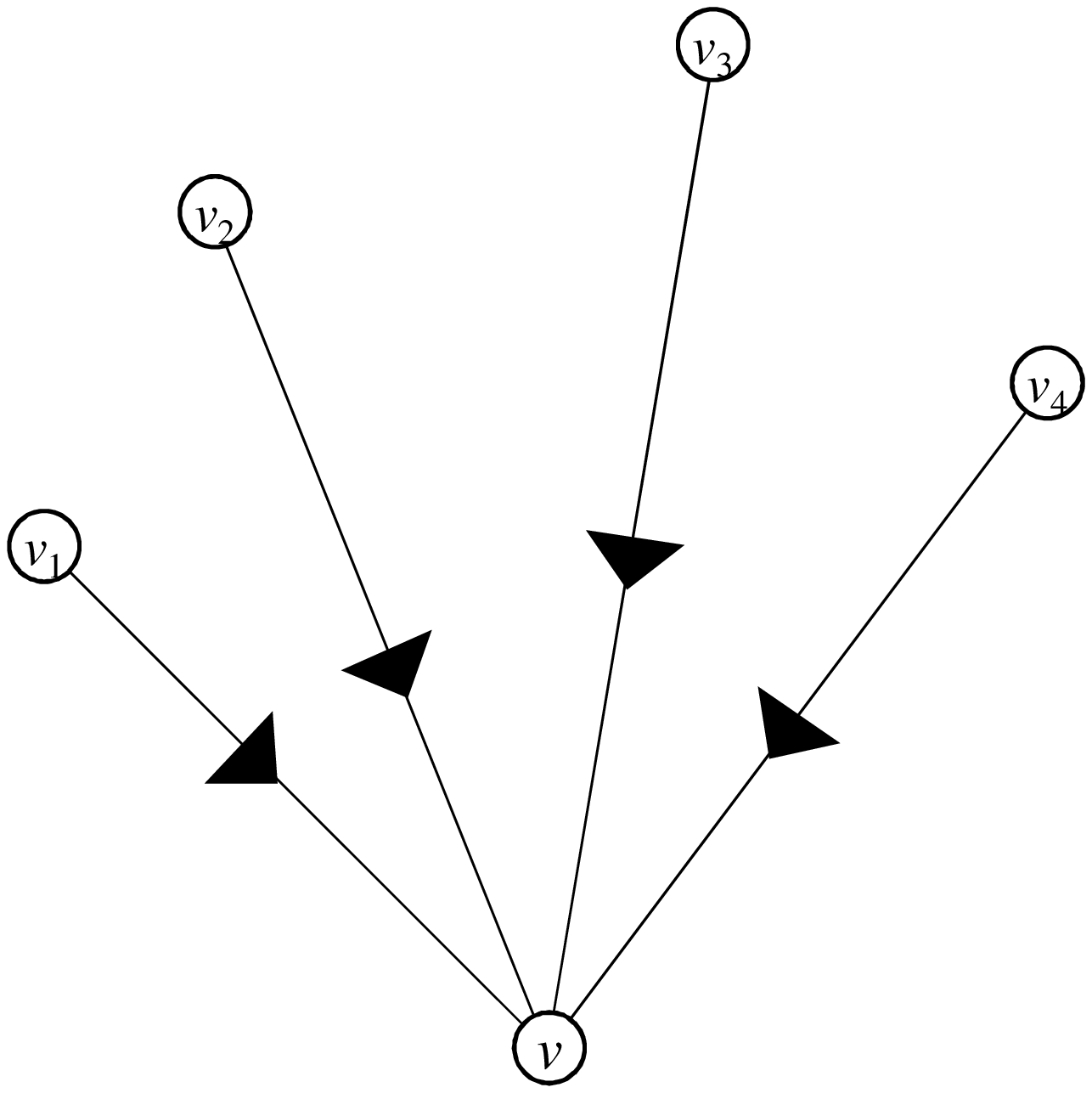}
\includegraphics[scale=0.37]{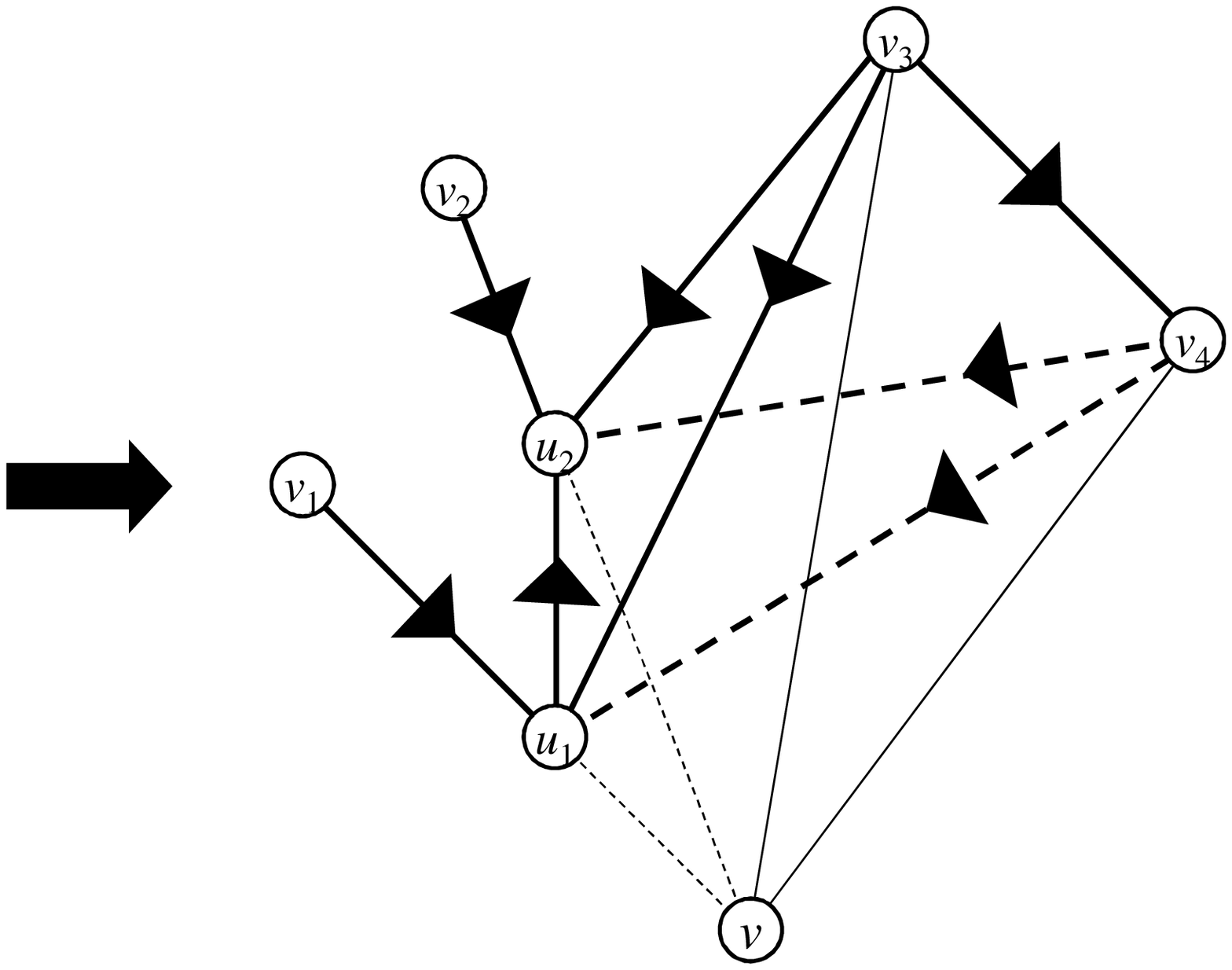}
\end{center}
\caption{A truncation operation for a digraph of a $4$-polytope}
\label{fig:graph_truncation}
\end{figure}

\begin{lem}
Given a $4$-polytope $P$ which has a simple vertex $v$,
let $F_{1},...,F_{m}$ be facets of $P$ not containing $v$,
and let $F_{m+1},...,F_{m+4}$ the four facets of $P$ containing $v$ labelled as follows:
$F_{m+1} \supset \{ v_{2},v_{3},v_{4},v \} $,
$F_{m+2} \supset \{ v_{1},v_{3},v_{4},v \} $,
$F_{m+3} \supset \{ v_{1},v_{2},v_{4},v \} $ and
$F_{m+4} \supset \{ v_{1},v_{2},v_{3},v \} $.

Then all of the facets of $tr(P)$ can be listed as follows.
\[ F'_{1}:=F_{1},\dots ,F'_{m}:=F_{m}, \
F'_{m+1}:= G_{m+1} \Cup \{ u_{2},v_2, v_{3},v_{4} \} ,\
F'_{m+2}:= G_{m+2}  \Cup \{ u_{1},v_1, v_{3},v_{4} \} ,\] 
\[ F'_{m+3}:= G_{m+3}  \Cup \{ u_{1},u_{2},v_1,v_2,v_{4} \} ,\
F'_{m+4}:= G_{m+4} \Cup \{ u_{1},u_{2},v_1,v_2,v_{3} \} ,\
F'_{m+5}:= \{ u_{1},u_{2},v_{3},v_{4} \} , \]
where
$G_{m+1} := F_{m+1} \setminus \{ v_{2},v_{3},v_{4},v \} $,
$G_{m+2} := F_{m+2} \setminus \{ v_{1},v_{3},v_{4},v \} $,
$G_{m+3} := F_{m+3} \setminus \{ v_{1},v_{2},v_{4},v \} $ and
$G_{m+4} := F_{m+4} \setminus \{ v_{1},v_{2},v_{3},v \} $.
\end{lem}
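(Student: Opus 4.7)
The plan is to analyze $\mbox{tr}(P,v) = P \cap (H \cup H^+)$ facet by facet, using the basic fact that every facet of an intersection $P \cap (H \cup H^+)$ arises either from a facet of $P$ meeting the closed half-space $H \cup H^+$ in a $(d{-}1)$-face, or as the new face $P \cap H$ cut out by $H$. Three pieces are needed: the unchanged facets of $P$, the four facets of $P$ through $v$, and the single new facet.

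First, I would note that any facet $F_i$ with $v \notin F_i$ lies entirely in $H \cup H^+$, since by the choice of $H$ every vertex of $P$ other than $v$ lies in the closed half-space $H \cup H^+$, and $F_i$ is the convex hull of such vertices. Hence such an $F_i$ survives the truncation unchanged, giving $F'_1, \dots, F'_m$.

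Next, for each of the four facets $F_{m+1}, \dots, F_{m+4}$ through $v$, I would use simplicity of $v$ to conclude that $F_j$ is a $3$-polytope whose vertex set consists of $v$ together with exactly three of the $v_i$. Truncation transforms $F_j$ into $F_j \cap (H \cup H^+)$, a $3$-polytope whose vertices are the original vertices of $F_j$ other than $v$, together with those points of $\{u_1, u_2\}$ that lie on edges of $F_j$ incident to $v$. A short case analysis then gives the claimed vertex sets: $F_{m+1}$ omits $v_1$ and so picks up only $u_2$; $F_{m+2}$ omits $v_2$ and so picks up only $u_1$; each of $F_{m+3}, F_{m+4}$ contains both $v_1$ and $v_2$, hence both edges $(v, v_1)$ and $(v, v_2)$, and so picks up both $u_1$ and $u_2$.

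Finally, the new facet is $F'_{m+5} = P \cap H$, whose vertices consist exactly of the points where $H$ meets the $1$-skeleton of $P$: by construction these are $u_1, u_2, v_3, v_4$, since $v_3, v_4$ lie on $H$ and the only edges of $P$ strictly crossing $H$ are $(v, v_1)$ and $(v, v_2)$. The main step I would treat with care is verifying that each of the listed sets is actually the full vertex set of a facet of $\mbox{tr}(P,v)$; in particular, that no $F_j \cap (H \cup H^+)$ drops below dimension three. This is the place where simplicity of $v$ is essential: because $v$ is incident to exactly four edges and $H$ meets only those, the cut is transverse and each supporting hyperplane of an $F_j$ continues to cut out a full $(d{-}1)$-face of the truncated polytope. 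Completeness of the list of facets is then immediate from the general fact about $P \cap (H \cup H^+)$ recalled at the start.
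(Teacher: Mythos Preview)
Your approach is the same as the paper's: both arguments observe that every facet-defining hyperplane of $\mbox{tr}(P,v)$ is either a facet-defining hyperplane of $P$ or the cutting hyperplane $H$, and then compute each intersection with $H\cup H^{+}$ case by case.

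One sentence needs correction, though. Simplicity of $v$ does \emph{not} imply that the vertex set of $F_{m+j}$ is exactly $\{v\}$ together with three of the $v_i$; if that were true every $G_{m+j}$ would be empty, contradicting the very statement you are proving. What simplicity gives is only that the \emph{neighbors of $v$} inside $F_{m+j}$ are exactly three of the $v_i$, while $F_{m+j}$ may well have further vertices (namely those in $G_{m+j}$). Your next sentence---``the original vertices of $F_j$ other than $v$, together with those points of $\{u_1,u_2\}$ that lie on edges of $F_j$ incident to $v$''---is the correct description and is what your case analysis actually uses, so once the misstated sentence is fixed the argument goes through and matches the paper's.
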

\begin{proof}
Facets of polytopes are determined by supporting hyperplanes.
Every supporting hyperplane of $tr(P)$ is one of the following types.
\begin{itemize}
\item It is also a supporting hyperplane of a facet of $P$. 
\item It is not a supporting hyperplane of any facet of $P$. In this case, this hyperplane is
the hyperplane $H$ in Definition \ref{df:truncation}.
\end{itemize}
Note that all supporting hyperplanes of $P$ are also supporting hyperplanes of $tr(P)$.
All the facets $F$ contained in $H^+$ are preserved by truncation since $F \cap (H \cup H^+)=F$.
On the other hand, $F_{m+1},F_{m+2},F_{m+3},F_{m+4}$ are changed.
Consider the supporting hyperplane $H_{m+1}$ of $F_{m+1}$.
The corresponding facet of $tr(P)$ is computed as follows.
\[ H_{m+1} \cap tr(P) = H_{m+1} \cap (H \cup H^+) \cap P = G_{m+1} \Cup \{ u_{2},v_2, v_{3},v_{4} \}.\]
Similarly, new facets $F'_{m+2},F'_{m+3},F'_{m+4}$ are obtained as in the statement of this lemma.
Finally, note that
the new facet $F'_{m+5}=\{ u_{1},u_{2},v_{3},v_{4} \}$ is obtained as the facet corresponding to the supporting hyperplane $H$. 
\end{proof}

\begin{lem}\label{lem:line}
Given a $d$-polytope $Q$ in $\Re^{d}$, let $H_{1},H_{2}$ be facets of $Q$
having common ($d-2$)-face $H_{1} \cap H_{2}$ of $Q$.
Then $Q$ has a shelling beginning with $H_{1},H_{2}$.
Similarly, $Q$ has a shelling ending with $H_{2},H_{1}$.
\end{lem}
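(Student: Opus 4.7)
The plan is to produce a line shelling of $Q$ beginning with $H_1,H_2$; the analogous ``ending'' statement will then follow by reversing the direction of the generating line. Since every line shelling is a shelling (as recalled in Section~2), it suffices to exhibit a generic directed line $L$ through an interior point $x$ of $Q$ whose first two crossings of facet-affine-hulls, starting at $x$ and moving in the positive direction, are $\mathrm{aff}(H_1)$ and then $\mathrm{aff}(H_2)$.

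The geometric observation driving the construction is that the ridge $R:=H_1\cap H_2$ is contained in exactly two facets of $Q$ (namely $H_1$ and $H_2$), and consequently no other facet hyperplane of $Q$ contains $\mathrm{aff}(R)$. Fix $p_0$ in the relative interior of $R$ and a $2$-dimensional affine plane $\Pi$ through $p_0$ transverse to $\mathrm{aff}(R)$. Then $\Pi\cap\mathrm{aff}(H_1)$ and $\Pi\cap\mathrm{aff}(H_2)$ are two distinct lines through $p_0$ bounding the $2$-dimensional wedge $\Pi\cap Q$ locally near $p_0$, while the traces in $\Pi$ of every other facet hyperplane of $Q$ stay at positive distance from $p_0$. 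I would pick $x\in\mathrm{relint}(Q)\cap\Pi$ very close to $p_0$ together with a direction $d\in\Pi$ so that the ray from $x$ in direction $d$ exits $Q$ through $\mathrm{aff}(H_1)$ at a point close to $p_0$ and then crosses $\mathrm{aff}(H_2)$ also close to $p_0$; by taking $x$ sufficiently close to $p_0$, every other facet hyperplane of $Q$ is met by the line $L=x+\mathbb{R}d$ strictly later along the positive direction. A small perturbation of $L$ then makes it generic with respect to $Q$ while preserving this ordering of the first two crossings, and the Brugesser--Mani construction applied to the perturbed line yields a line shelling of $Q$ that begins with $H_1,H_2$.

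For the second statement I would take the line $L$ just constructed and reverse its direction. The wrap-around at infinity in the line-shelling construction is symmetric under direction reversal, so reversing $L$ reverses the resulting ordering of the facets; the reversed line shelling therefore ends with $H_2,H_1$. The main step requiring care is the genericity perturbation: one must argue that the condition ``$\mathrm{aff}(H_1)$ is crossed first and $\mathrm{aff}(H_2)$ second'' is open on the space of directed lines through interior points of $Q$, so that a line simultaneously generic with respect to $Q$ and satisfying this ordering exists. This should follow from continuity of intersection points of a line with the finite collection of facet hyperplanes, combined with the uniform separation between $p_0$ and the traces in $\Pi$ of all facet hyperplanes other than $\mathrm{aff}(H_1)$ and $\mathrm{aff}(H_2)$.
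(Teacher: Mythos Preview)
Your proposal is correct and follows essentially the same approach as the paper: construct a line shelling whose generating line passes through the interior of $Q$ very close to a relative interior point of the ridge $R=H_1\cap H_2$, so that the first two facet hyperplanes encountered are $\mathrm{aff}(H_1)$ and $\mathrm{aff}(H_2)$. The paper's version is terser---it writes down explicit convex combinations $r\in H_1$, $r'\in H_2$ near a point $p\in\mathrm{relint}(R)$ and the line through $r$ with direction $2p-r-r'$, then appeals to a perturbation---while you supply more of the supporting geometry; but the underlying idea is the same. One small tightening: the fact you actually need is that no other facet hyperplane passes through $p_0$ (not merely that none contains $\mathrm{aff}(R)$), and this follows because $p_0\in\mathrm{aff}(H_3)$ forces $p_0\in H_3$, hence $R\subset H_3$ by the relative-interior condition, so $H_3\in\{H_1,H_2\}$.
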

\begin{proof}
Let ${\bf p},{\bf q},{\bf q'} \in \Re^{d}$
be relative interior points of $H_1 \cap H_2$, $H_1$, $H_2$ respectively
and consider a point
${\bf r} = \frac{{\bf p}+\epsilon_1 {\bf q}}{1+\epsilon_1 } \in H_1$
and
${\bf r'} = \frac{{\bf p}+\epsilon_2 {\bf q'}}{1+\epsilon_2 } \in H_2$
for sufficiently small $\epsilon_1 > 0$ and $\epsilon_2 > 0$.
Then the line shelling with respect to a generic line $L:
\{ ({\bf p}-{\bf r}+{\bf p}-{\bf r'} )t+{\bf r} \mid t \in \Re^{d} \}$,
perturbing $q$ and $q'$ if necessary, satisfies the condition. 
\end{proof}

We prove the following theorem using the lemma.
\begin{thm} \label{prop:1stConstruction}
Let $G(P)$ be a polytopal digraph of a polytope $P$ satisfying the conditions in Theorem \ref{thm:main_3}.
If $G(P)$ is an $X$-type graph, $tr(G(P))$ is also an $X$-type graph.
\end{thm}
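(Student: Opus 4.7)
The plan is to verify the four defining properties of an $X$-type graph for $tr(G(P))$—acyclicity, unique sink orientation, the Holt-Klee property, and failure of the shelling property. The first three will follow from a direct case analysis; the last is the heart of the argument and uses the equivalence of shelling definitions from Section~4.

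Acyclicity is immediate: every new edge points into $\{u_1,u_2\}$ or is the single edge $(u_1,u_2)$, so no cycle can arise from the already acyclic orientation on $V(P)\setminus\{v\}$ inherited from $G(P)$, and $u_2$ is the unique sink of $tr(G(P))$. For USO and Holt-Klee I would check the property face by face on $tr(P)$, partitioning the faces into those disjoint from $\{u_1,u_2\}$ and those meeting $\{u_1,u_2\}$. Faces disjoint from $\{u_1,u_2\}$ have their orientation inherited from $G(P)$ on $V(P)\setminus\{v\}$ together with the possible new edge $(v_3,v_4)$, and USO and Holt-Klee follow from the corresponding properties of $G(P)$, exploiting that $v$ was the unique simple sink of $G(P)$. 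Faces meeting $\{u_1,u_2\}$ lie inside one of the explicit facets $F'_{m+1},\dots,F'_{m+5}$ and are small enough to be inspected directly; the orientation rule forces the sinks to be $u_1$ or $u_2$, the sources to be among $v_1,\dots,v_4$, and the required vertex-disjoint paths are easy to exhibit.

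The failure of shelling is the main step. I would argue by contrapositive: assuming $tr(G(P))$ satisfies the (new, equivalent) shelling property, I will deduce that $G(P)$ also does, contradicting its $X$-type status. Choose a topological sort $r = w_1,\dots,w_{n-1},v$ of $G(P)$ in which $v_3$ precedes $v_4$; such an $r$ exists because there is no directed path from $v_4$ to $v_3$ in $G(P)$. Then $r' = w_1,\dots,w_{n-1},u_1,u_2$ is a topological sort of $tr(G(P))$, so by assumption $F(r')$ is a shelling of $tr(P)^*$. I claim this forces $F(r)$ to be a shelling of $P^*$. For each $k \le n-1$, the lemma in Section~4 expresses $F(w_k)\cap\bigcup_{j<k} F(w_j)$, in both dual polytopes, as a union over in-neighbours $w_j \to w_k$; the two expressions differ only by the possible presence of the new dual vertex $F'_{m+5}$ and by the renaming $F_{m+i} \leftrightarrow F'_{m+i}$ for $i=1,\dots,4$, and I would show that removing this extra data preserves the property of being a beginning of a shelling of $F(w_k)$. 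For $k = n$ the intersection is the whole boundary of $F(v)$, which is automatically a complete shelling beginning since $v$ is simple and all four of its neighbours precede it in $r$.

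The main obstacle will be the comparison at indices $k$ with $w_k \in \{v_1,v_2,v_3,v_4\}$, where $F(w_k)$ genuinely acquires $F'_{m+5}$ as an extra vertex and its face structure changes. The key observation I would exploit is that $F'_{m+5} = \{u_1,u_2,v_3,v_4\}$ is a tetrahedron attached as a cap along the $(v_3,v_4)$-side of the spot previously occupied by the dual vertex $F(v)$; hence the difference between the intersections in $P^*$ and $tr(P)^*$ is a single ridge incident to $F'_{m+5}$, and stripping this cap preserves being a shelling beginning. Once this local correspondence is established, the argument runs uniformly across all $k$ and the contradiction closes the proof.
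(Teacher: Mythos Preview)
Your overall plan matches the paper's: verify acyclicity, USO, and Holt--Klee by the same face-by-face case analysis, then prove non-shelling by contrapositive via the equivalence of Section~4. Two corrections are needed.

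First, a minor misidentification: $v_1$ and $v_2$ are not hard cases. Since $F'_{m+5}=\{u_1,u_2,v_3,v_4\}$ contains neither of them, one checks that $F(v_i)\equiv F'(v_i)$ for $i=1,2$: the neighbour $v$ of $v_i$ in $P$ is replaced by the neighbour $u_i$ in $tr(P)$, and the corresponding facet of the vertex figure is the same triangle $\{F_{m+j}:j\neq i,\,1\le j\le 4\}$, while all other facets are unchanged. So $v_1,v_2$ fall into the easy case along with the generic vertices. The only delicate vertices are $v_3,v_4$ (and $v$, which you handle correctly).

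Second, and this is the real gap: your cap-stripping proposal for $v_3,v_4$ is not a proof as it stands. Making it rigorous would require a lemma to the effect that undoing a stellar subdivision of a triangular facet of a $3$-polytope preserves the property ``this collection of facets is the beginning of a shelling'', and you have not supplied such an argument. The paper takes a different and shorter route: it never compares $F(v_3)$ with $F'(v_3)$ at all. Instead it works directly in $P^*$, arguing that the facets of $F(v_3)$ \emph{not} appearing in $F(v_3)\cap\bigcup_{w'<v_3}F(w')$ are at most the two facets $F_{v_3,1}=F(v_3)\cap F(v_4)$ and $F_{v_3,2}=F(v_3)\cap F(v)$, which share the ridge $\{F_{m+1},F_{m+2}\}$; then Lemma~\ref{lem:line} (any polytope admits a shelling ending in any prescribed pair of adjacent facets) shows the complement is the tail of a shelling, so the intersection is a beginning. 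The same reasoning handles $v_4$. This use of Lemma~\ref{lem:line} is the key technical device your proposal is missing.
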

\begin{proof}
In this proof,
we use notation $F(w)$ to describe the facet of $P^*$ corresponding to
a vertex $w$ of $P$ and $F'(w')$ a vertex $w'$ of $tr(P,v)^{*}$.

First, we notice that $tr(G(P))$ is acyclic
since no edge is directed from $u_1,u_2$ to
the subgraph $G(P)_{P \setminus \{ v \} }$ induced by $P \setminus \{ v \}$ of $G(P)$,
and both $G(P)_{P \setminus \{ v \} }$ and the subgraph induced by $u_1, u_2$ are acyclic.

Then we consider the USO property and the Holt-Klee property.
Let $H$ be an $i$-face of $tr(P)$.
Since $H$ satisfies the properties if $i < 2$ clearly, we assume $i \geq 2$.
We proceed by case analysis.\\
{\it {\bf Case 1:}} $H$ contains both $u_1$ and $u_2$.

In this case,
$H$ is $F'_{m+5}$ itself or is contained in one of the facets $F'_{m+1}, \cdots , F'_{m+4}$,
which we denote by $F'$.
If $H$ is identical to $F'_{m+5}$,
it satisfies the USO property and the Holt-Klee property clearly,
thus we consider the other cases.
We denote the corresponding facet in $P$ by $F$,
and consider a vertex $p \neq u_1,u_2$.
Then there is a path $p \rightarrow \dots \rightarrow w \rightarrow v$ in $G(P)_{F}$,
where $w$ is one of the vertices $v_1, \dots ,v_4$.
Therefore we can take a path
$p \rightarrow \dots \rightarrow w \rightarrow u_1 \rightarrow u_2$ in $tr(G(P))_{F'}$,
and thus $p$ is not a sink.
We can prove the uniqueness of a source similarly.
In addition, the above discussion also confirms the Holt-Klee property.\\
{\it {\bf Case 2:}} $H$ contains $u_1$ and does not contain $u_2$.

We can also prove the USO property and the Holt-Klee property similarly using a path
$p \rightarrow \dots \rightarrow w \rightarrow u_1$ instead.\\
{\it {\bf Case 3:}} $H$ contains $u_2$ and does not contain $u_1$.

We use a path $p \rightarrow \dots \rightarrow w \rightarrow u_2$ instead.\\
{\it {\bf Case 4:}} $H$ contains neither $u_1$ nor $u_2$.\\
(Case 4-a) $H$ is contained in one of the facets $F'_{1},\dots ,F'_{m}$.\\
Since $F'_1 \equiv F_1,\dots ,F'_m \equiv F_m$, $H$ 
clearly satisfies the USO property and the Holt-Klee property.
\\
(Case 4-b) $H$ is contained in none of the facets $F'_{1},\dots ,F'_{m}$.\\
Since $dim(H) \geq 2$, $H$ should be identical to $F'_{m+1} \cap F'_{m+2}=(G_{m+1} \Cap G_{m+2}) \Cup \{ v_3,v_4 \}$,
where $G_{m+1} \Cap G_{m+2} \neq \phi$.
In this case, $v_4$ is a unique sink and there are two directed paths from the source $s$ of $tr(G(P))_{H}$ to the sink $v_4$:
\[ s \rightarrow p \rightarrow \dots \rightarrow w \rightarrow v_4, \ s \rightarrow p' \rightarrow \dots \rightarrow w' \rightarrow v_3 \rightarrow v_4, \]
where $s \rightarrow p \rightarrow \dots \rightarrow w \rightarrow v_4 \rightarrow v$ and 
$s \rightarrow p' \rightarrow \dots \rightarrow w' \rightarrow v_3 \rightarrow v$ are vertex disjoint paths in $G(P)_{H}.$

From the above four cases,
we conclude that $tr(G(P))$ satisfies the USO property and the Holt-Klee property.

Next, we consider the shelling property.
In order to prove that if $G(P)$ does not satisfy the shelling property, $tr(G(P))$ does not also satisfy the shelling property, 
we prove the contra-positive. 
First, we can take a topological sort $w_1,...,w_{l-2},u_1,u_2$ of $tr(G(P),v)$, where $w_1,..,w_{l-2} \in G(P) \setminus \{ v \}$. Then 
$F(w_1),...,F(w_{l-2})$, $F(u_1),F(u_2)$ is a shelling of $tr(P,v)$ by the new definition of the shelling property.
We prove that $F(w_{1}),\dots ,F(w_{l-2}),F(v)$ is a shelling of $P$ using Lemma~\ref{lem:line}
as follows.

For $w \notin \{ v_{3},v_{4},v \}$, we have $w' \notin \{ v_{1},v_{2},v_{3},v_{4},v \} $ for $w' < w$ and
\[ F(w) \cap \bigcup_{w' \in ver(P), w' < w}{F(w')} =  F'(w') \cap \bigcup_{w' \in ver(P), w' < w}{F'(w')}, \]
is the beginning of a shelling of $F(w) \equiv F'(w)$.

We consider the remaining parts. First, we notice
\[ F(v_{3}) \cap F(w') \neq \phi , w' > v_{3} \Rightarrow w' \in \{ v_{4},v \} .\]
and we compute 
$F_{v_3,1}:=F(v_{3}) \cap F(v_{4})$ and $F_{v_3,2}:=F(v_{3}) \cap F(v)$
in order to compute $F(v_3) \cap \bigcup_{w' < v_3}{F(w')}$.
First, we have
\[ F_{v_3,2}=F(v_{3}) \cap F(v) = \{ F_{m+1},F_{m+2},F_{m+3} \} \]
and thus it is a facet of $F(v_3)$.
Next, we have
\[ F_{v_3,1}=F(v_{3}) \cap F(v_{4}) = \{ F_{m+1},F_{m+2},\dots \} .\]
If $F_{v_3,1}$ is not a facet of $F(v_3)$, it should be contained in at least 2 facets of $F(v_3)$.
It follows that $F_{v_3,1}$ should be contained in at least two of $F(w_1) \cap F(v_3)$, \dots , $F(w_{l-4}) \cap F(v_3)$,  $F(v_{3}) \cap F(v)$.
Therefore $F_{v_3,1}$ should be contained in $F(v_3) \cap \bigcup_{w' < v_3}{F(w')}$.
In this case, we have
\[ F(v_3) \cap \bigcup_{w' < v_3}{F(w')}= \partial F(v_3) \setminus F_{v_3,2} , \]
where $\partial F(v_3)$ denotes the union of all the facets of $F(v_3)$.

In the other case, we have
\[ F(v_3) \cap \bigcup_{w' < v_3}{F(w')}= \partial F(v_3) \setminus F_{v_3,2} \]
or
\[ F(v_3) \cap \bigcup_{w' < v_3}{F(w')}= \partial F(v_3) \setminus  (F_{v_3,1} \cup F_{v_3,2}) .\]
Thus we conclude that $F(v_3) \cap \bigcup_{w' < v_3}{F(w')}$ is the beginning of a shelling of $F(v_3)$ 
by using Lemma \ref{lem:line}.
In a similar way it can
also be proved that $F(v_4) \cap \bigcup_{w' < v_4}{F(w')}$ and  $F(v) \cap \bigcup_{w' < v}{F(w')}$ 
are the beginning of shellings of $F(v_4)$, $F(v)$, respectively.
Using the original definition of the shelling property, we can conclude that $G(P)$ satisfies the shelling property.
This completes the proof.
\end{proof}

We note here that the unique sink of $tr(G(P))$ is also simple, and we can apply the truncation operation repeatedly.

\subsection{Pyramid operation}
Next, we define the pyramid operation to generate a new $X$-type graph of a ($d+1$)-polytope with ($n+1$) vertices
from that of a $d$-polytope with $n$ vertices.
\begin{defn}(Pyramid of polytopes)\label{df:pyramid}
Given a $d$-polytope $P$ in $\Re^d$,
its {\it pyramid polytope} $py(P, v)$
is a $(d+1)$-polytope in $\Re^{d+1}$
which is the convex hull of $P \times \{ 0 \}$
and a point $v \in \Re^{d+1}$ not on the $d$-dimensional subspace
containing $P$.
A canonical choice is to set $v_{d+1} = 1$,
see Figure \ref{fig:pyramid}.
\end{defn}

\begin{figure}[ht]
\begin{center}
\includegraphics[scale=0.40]{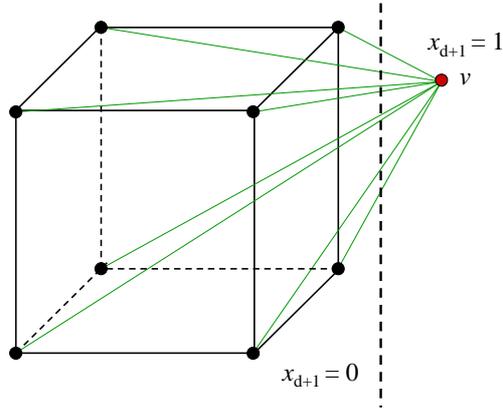}
\end{center}
\caption{A pyramid polytope of a $d$-polytope}
\label{fig:pyramid}
\end{figure}

\begin{defn}(Pyramid of polytopal digraphs)
Given a polytopal digraph $G(P)$ of a $d$-polytope $P$,
the {\it pyramid} $py(G(P),v)$ of $G(P)$ is a polytopal digraph of $(d+1)$-polytope
$py(P,v)$ whose new edges are directed as follows.
\[(v_1,v),(v_2,v),\dots ,(v_n,v), \]
for vertices $v_1,\dots ,v_n$ of $P$.
\end{defn}

\begin{lem}
Let $F_{1},\dots ,F_{m}$ be the facets of $P$.
The facets of $py(P,v)$ can be listed as follows.
\[ F_{1} \Cup \{ v \} ,\dots , F_{m} \Cup \{ v \} , P. \]
\end{lem}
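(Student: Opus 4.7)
The plan is to identify each claimed facet with a supporting hyperplane of $py(P,v)$ in $\Re^{d+1}$ and then argue that no other facets exist. Assume without loss of generality that $v = (0,\dots,0,1) \in \Re^{d+1}$ and $P \subset \Re^d \times \{0\}$, so that $py(P,v) \subset \{ x_{d+1} \ge 0 \}$ with the apex $v$ being the unique point of $py(P,v)$ at height $1$.

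First I would handle the ``base'' facet. The hyperplane $H_0 = \{ x_{d+1}=0 \}$ supports $py(P,v)$ from below, and $py(P,v) \cap H_0 = P \times \{0\}$. Since $P$ is $d$-dimensional and $py(P,v)$ is $(d+1)$-dimensional, this intersection is a facet, corresponding to $P$ in the statement. Next, for each facet $F_i$ of $P$, let $H_i \subset \Re^d$ be its supporting hyperplane, so that $P \subset H_i^-$. I would lift $H_i$ to the unique hyperplane $H_i' \subset \Re^{d+1}$ that contains both $H_i \times \{0\}$ and the apex $v$; concretely, if $H_i = \{ x \in \Re^d : a^\top x = b \}$ with $b \ge 0$ (which can be arranged by translating $P$ into the positive orthant, or by adjusting signs), then $H_i' = \{ (x,x_{d+1}) : a^\top x + (b - a^\top 0) x_{d+1} = b \}$ works, with the coefficient of $x_{d+1}$ chosen so that $v \in H_i'$. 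Then $P \times \{0\}$ lies in $(H_i')^-$, and $v \in H_i'$, so $H_i'$ supports $py(P,v)$. Its intersection with $py(P,v)$ is $\mathrm{conv}((F_i \times \{0\}) \cup \{v\}) = F_i \Cup \{v\}$, which is $d$-dimensional, hence is a facet.

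Finally I must show that these are all of the facets; this is the main step. Any facet $F'$ of $py(P,v)$ is cut out by some supporting hyperplane $H'$, and I would split into two cases according to whether $v \in H'$. If $v \notin H'$, then $v$ lies strictly on the interior side of $H'$, and $H' \cap py(P,v) = H' \cap (P \times \{0\})$ because every point of $py(P,v)$ is a convex combination of $v$ and a point of $P \times \{0\}$, and the presence of $v$ strictly inside forces such combinations to leave $H'$ unless the combination lies entirely in $P \times \{0\}$. Thus $F' \subset P \times \{0\}$, and since $F'$ is $d$-dimensional while $P \times \{0\}$ is $d$-dimensional, $F' = P \times \{0\}$. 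If instead $v \in H'$, then $H' \cap (P \times \{0\})$ is a face of $P$, say corresponding to a face $G$ of $P$, and $F' = G \Cup \{v\}$. For $F'$ to be $d$-dimensional we need $\dim G = d-1$, i.e.\ $G$ is a facet of $P$, giving $F' = F_i \Cup \{v\}$ for some $i$.

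The delicate point in the above case analysis is the claim that when $v \notin H'$ the intersection $H' \cap py(P,v)$ collapses onto $P \times \{0\}$; I expect this to be the main obstacle and would nail it down by observing that, writing a point of $py(P,v)$ as $(1-t)(p,0) + t\,v$ with $p \in P$ and $t \in [0,1]$, the value of the linear functional defining $H'$ is an affine function of $t$ that vanishes at exactly one value of $t$ (since $v \notin H'$), so the full face $F'$ has a constant value of $t$; combined with $F'$ being $d$-dimensional and $P \times \{0\}$ being the unique $d$-face at constant height, this forces $t=0$ and $F' = P \times \{0\}$.
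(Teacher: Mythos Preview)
Your proof is correct. The paper itself states this lemma without proof, treating it as a standard fact about the face lattice of a pyramid, so there is no argument in the paper to compare against. Your approach via supporting hyperplanes, together with the case split on whether the apex $v$ lies in the hyperplane, is the natural way to establish the result.

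One small remark on the final paragraph: the claim that ``the value of the linear functional defining $H'$ is an affine function of $t$ that vanishes at exactly one value of $t$, so the full face $F'$ has a constant value of $t$'' is not quite right as written, since for different base points $p$ the root in $t$ could a priori be different. However, this paragraph is redundant: your earlier sentence already contains the correct argument. Namely, if $v\notin H'$ then $\ell(v)<b$, and for any $q=(1-t)(p,0)+tv$ with $t>0$ and $p\in P$ one has
\[
\ell(q)=(1-t)\,\ell((p,0))+t\,\ell(v)\le (1-t)b+t\,\ell(v)<b,
\]
so $q\notin H'$. Hence every point of $F'$ has $t=0$, i.e.\ $F'\subset P\times\{0\}$, and by dimension $F'=P\times\{0\}$. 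With this clarification the argument is complete.
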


We prove the following theorem.
\begin{thm} \label{prop:2ndConstruction}
Let $P$ be a $d$-polytope.
If $G(P)$ is an $X$-type graph, $py(G(P),v)$ is also an $X$-type graph.
\end{thm}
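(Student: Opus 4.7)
The plan mirrors the proof of Theorem~\ref{prop:1stConstruction}: verify the four defining conditions of an $X$-type graph in turn for $py(G(P),v)$. Three of them are largely routine because the apex $v$ is a newly introduced unique sink whose only incident arcs point into it.

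Acyclicity is immediate, as the new arcs are all oriented into $v$ and the remaining subgraph is the acyclic $G(P)$. For USO, a face of $py(P,v)$ is either a face of $P$ or a pyramid $py(f,v)$ over a face $f$ of $P$; in the first case the USO property is inherited from $G(P)$, and in the second $v$ is the unique sink while the unique source of $f$ in $G(P)$ remains the unique source. For Holt-Klee on a pyramid face $py(f,v)$ of dimension $k+1$, I would take the $k$ internally vertex-disjoint source-to-sink paths in $G(P)_f$ (with source $s$ and sink $t$) supplied by Holt-Klee for $G(P)$, reroute each by replacing its final arc $u_i\to t$ with the new arc $u_i\to v$, and append either the direct arc $s\to v$ or the path $s\to t\to v$ (depending on whether the direct arc would coincide with a rerouted path) to obtain the required $k+1$ internally vertex-disjoint paths from $s$ to $v$. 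Faces not containing $v$ inherit Holt-Klee directly from $G(P)$.

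The substance of the theorem is the failure of the shelling property, which I would establish using the new definition from Section 4. Since $G(P)$ is an $X$-type graph, there is a topological sort $w_1,\dots,w_n$ of $G(P)$ such that $F_P(w_1),\dots,F_P(w_n)$ is not a shelling of $P^*$; extend this to the topological sort $w_1,\dots,w_n,v$ of $py(G(P),v)$. The key combinatorial observation is that for each $w\neq v$, the facet $F(w)$ of $py(P,v)^*$ is combinatorially the pyramid $py(F_P(w),P)$, whose apex corresponds to the base facet $P$ of $py(P,v)$. For any edge $(w_i,w_j)$ of $G(P)$ one checks
\[
F(w_i)\cap F(w_j)\;=\;py\bigl(F_P(w_i)\Cap F_P(w_j),\,P\bigr),
\]
a lateral facet of $F(w_i)$, which combined with the lemma of Section 4 yields
\[
F(w_i)\cap\bigcup_{j<i}F(w_j)\;=\;py\!\left(F_P(w_i)\cap\bigcup_{j<i}F_P(w_j),\;P\right).
\]
Taking $i$ to be the smallest index at which the original shelling condition fails, the theorem will follow once I establish the following sublemma: for a polytope $Q$, a point $v$ outside its affine hull and a collection $\mathcal{G}$ of facets of $Q$, the union $\bigcup_{G\in\mathcal{G}}G$ is a beginning of some shelling of $Q$ if and only if $\bigcup_{G\in\mathcal{G}}py(G,v)$ is a beginning of some shelling of $py(Q,v)$.

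The sublemma on pyramid shellings is the main obstacle. The forward direction is straightforward: given a shelling $G_1,\dots,G_r,\dots,G_s$ of $Q$, the sequence $py(G_1,v),\dots,py(G_r,v),Q,py(G_{r+1},v),\dots,py(G_s,v)$ is a shelling of $py(Q,v)$, obtained by inserting the base facet immediately after the prescribed initial segment. The reverse direction---that an initial segment of a shelling of $py(Q,v)$ consisting purely of lateral facets projects to an initial segment of a shelling of $Q$---requires more care. I would argue it either by induction on dimension, tracking the position of the base $Q$ in an extended shelling of $py(Q,v)$ and exploiting that the pyramid operation commutes with intersection of facets, or by lifting a line shelling of $Q$ realizing the given initial segment to a line shelling of $py(Q,v)$ realizing its pyramid counterpart via Lemma~\ref{lem:line}.
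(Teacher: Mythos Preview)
Your proposal is correct and follows the same outline as the paper. Two minor differences are worth noting. For the Holt-Klee property on a pyramid face $H=py(f,v)$ of dimension $i$, the paper gives a simpler construction: the source $s$ has at least $i-1$ out-neighbours $v_1,\dots,v_{i-1}$ in the $(i{-}1)$-face $f$, and the paths $s\to v$ and $s\to v_j\to v$ ($1\le j\le i-1$) are already internally disjoint, so no rerouting of the base Holt-Klee paths is needed. For the shelling failure the paper argues the contrapositive directly (a shelling of $py(P,v)^*$ restricts to one of $P^*$), but its concluding step---that if $G_1,\dots,G_r$ begin a shelling of the pyramid $F'(w_j)$ then their bases begin a shelling of $F(w_j)$---is precisely the reverse implication of your sublemma, used without proof. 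So you have correctly isolated the one nontrivial ingredient; your inductive plan for it (using that the reverse of a polytope shelling is again a shelling to handle the case where the base $Q$ appears in the initial segment) is the right way to make this rigorous.
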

\begin{proof}
In this proof, we denote the combinatorial polar of $P$ by $P^{*}$ and that of $py(P,v)$ by $py(P,v)^{*}$.
In addition, we use the notation $F(w)$ to describe the facet of $P^*$ corresponding to 
a vertex $w$ of $P$ and the notation $F'(w')$ to describe a vertex $w'$ of $py(P,v)^{*}$.

Consider an $i$-face $H$ of $py(P)$ for $i \leq d+1$.
If $H$ is a face of $P$, the subgraph $py(G(P),v)_{H}$ induced by $H$ clearly satisfies acyclicity, USO property
and Holt-Klee property.
if $H$ is a face of $F_{j}$ for $j \leq m$, the subgraph $py(G(P),v)_{H \setminus \{ v \} }$ induced by 
the $(i-1)$-face $H \setminus \{ v \}$
satisfies acyclicity and the USO property. 
There is no edge directed from $v$ to $H \setminus \{ v \}$ and thus
$py(G(P),v)_{H}$ satisfies acyclicity.
Similarly, we see that the unique source of $py(G(P),v)_{H \setminus \{ v \} }$ is also the
unique source $s$ of $py(G(P),v)_{H}$
and $v$ is the unique sink of $py(G(P),v)_{H}$.
In addition, we notice that $py(G(P),v)_{H}$ has $i$ disjoint paths from $s$ to $v$:
\[ s \rightarrow v, \  s \rightarrow v_{1} \rightarrow v, \ \dots \ ,s \rightarrow v_{i-1} \rightarrow v \]
where $v_{1},\dots ,v_{i-1}$ are neighbours of $v$ in $py(G(P),v)_{H \setminus \{ v \} }$.
Therefore, $py(G(P),v)$ also satisfies acyclicity, the USO property and the Holt-Klee property.

In order to prove that if $G(P)$ does not satisfy the shelling property, $py(G(P),v)$ does not also satisfy the shelling
property, we prove the contrapositive. 
Assume that $F'(w_{1}),\dots ,F'(w_{n})$ is a shelling of $py(P,v)^{*}$
for a topological sort $w_{1},\dots ,w_{n}$  of $py(G(P),v)$.
First we note that $w_{n}$ must be $v$. Immediately,
\[ F(w_{1})=F'(w_{1}) \setminus \{ v \} ,\dots , F(w_{n-1})=F'(w_{n-1}) \setminus \{ v \} \]
and 
\[ F(w_{j}) \cap \bigcup_{i=1}^{j-1}{F(w_{i})} =
\bigcup_{i=1}^{j-1}{[ (F'(w_{j}) \cap F'(w_{i})) \setminus \{ v \} ]} = \bigcup_{i=1}^{r}{(G_{i} \setminus \{ v \} ) ,}\]
for $1 \leq j < n$, where $G_{1},\dots ,G_{t},\dots ,G_{r}$ is a shelling of $F'(w_{j})$.
Since $G_{1} \setminus \{ v \} ,\dots ,G_{t} \setminus \{ v \} ,\dots , G_{r} \setminus \{ v \}$ are facets of $F(w_{j})$,
we conclude that $F(w_{1}),\dots ,F(w_{n-1})$ is a shelling of $P^{*}$.
\end{proof}

\subsection{Proof of Theorem \ref{thm:main_3}}
Let $P_0$ be a 4-polytope with $n_0$ vertices  which
has an $X$-type graph $G(P_0)$ satisfying the conditions in Theorem \ref{thm:main_3}. Consider any $d \ge 4$ and $n \ge n_0+d-4$.
We apply the truncation operation $n-n_0-d+4$ times to obtain a 4-polytope $P_1$
with $n-d+4$ vertices and $X$-type graph $G(P_1)$.
Then we apply the pyramid operation $d-4$ times to $P_1$
to obtain the required polytope $P$.
The correctness of the construction follows immediately from Theorems \ref{prop:1stConstruction}
and \ref{prop:2ndConstruction}.

We note here that we can apply the construction to $G(\Omega )$ in Section 3 by putting $v:=F_{10},v_1:=F_1,v_2:=F_2,v_3:=F_3,v_4:=F_9$,
and obtain a family of $X$-type graphs $G(\Omega_{n,d})$ of $d$-polytope with $n$ vertices for $d \geq 4, n \geq 6+d$.

%

\section{LP orientations of $d$-crosspolytope}
In this section, we investigate orientations of $d$-crosspolytopes, motivated by results of Develin~\cite{Mi06}.
Develin~\cite{Mi06} introduced a fundamental notion of {\it pair sequences} for acyclic orientations of $d$-crosspolytopes.
\begin{defn}{\rm (\cite{Mi06})}\\
\label{psdef}
Let $P_{1},P_{-1},\dots,P_{d},P_{-d}$ be vertices of $d$-crosspolytope $C_d$, where
$P_{-i}$ is the unique vertex which is not connected to $P_i$ for $1 \leq i \leq d$.
Given an acyclic orientation {\it O} of $C_d$, the corresponding {\it pair sequence} is defined as a sequence $(L_1 L_2)\dots (L_{2d-1} L_{2d})$,
where $L_1,\dots,L_{2d}$ is a labeling of the vertices of $C_d$ satisfying the following conditions:
\begin{enumerate}
\item $\{ L_1,L_2,\dots,L_{2d} \} = \{ 1,2,\dots,2d \}$, 
\item $L_1 < L_3 < \dots < L_{2d-1}$,
\item $L_1 < L_2$, $L_3 < L_4$, \dots, $L_{2d-1} < L_{2d}$,
\item every edge of $C_d$ is directed from a smaller label to a larger label.
\item the pairs $(L_1,L_2),\dots,(L_{2d-1},L_{2d})$ are pairs of labels of the antipodal pairs $(P_1,P_{-1}),\dots,(P_d,P_{-d})$.
\end{enumerate}
\end{defn}
For details, see~\cite{Mi06}. We note that given a pair sequence, we can also construct 
the (not necessarily unique) corresponding acyclic orientations.
He completely characterized LP orientations of $d$-crosspolytopes using pair sequences. 
\begin{thm}{\rm (\cite{Mi06})}\\
\label{mi06}
An acyclic orientation ${\it O}$ of $d$-crosspolytope is an LP orientation if and only if
the corresponding pair sequence $(L_1 L_2)(L_3 L_4)\dots (L_{2d-1}L_{2d})$ satisfies the following condition:
\[ \{ L_1,\dots,L_{2k} \} \neq \{ 1,\dots,2k \} \text{ for all $0 < k < d$}. \]
A pair sequence satisfying the above condition is called a {\it good pair sequence}.
\end{thm}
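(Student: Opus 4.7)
The statement is an ``if and only if'', so I would prove each direction separately and combine.

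For necessity (LP orientation $\Rightarrow$ good pair sequence), I would exploit the shelling property, which is a necessary condition for LP digraphs by the Corollary at the end of Section~4. The combinatorial polar of $C_d$ is the $d$-cube $Q_d$; its facets are indexed by vertices of $C_d$, and two cube facets are antipodal iff the corresponding vertices of $C_d$ form an antipodal pair. A topological sort $v_1,\dots,v_{2d}$ of the LP orientation thus induces a shelling $F_1,\dots,F_{2d}$ of $Q_d$, and the forbidden condition $\{L_1,\dots,L_{2k}\}=\{1,\dots,2k\}$ translates exactly into the statement that the first $2k$ facets of this cube shelling form $k$ antipodal pairs. Necessity therefore reduces to the following \textbf{Key Lemma:} for every $d\ge 2$ and every $0<k<d$, no shelling of $Q_d$ has its first $2k$ facets constituting exactly $k$ antipodal pairs.

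I would prove the Key Lemma by induction on $d$. The base case $d=2$, $k=1$ is immediate: two opposite edges of the square are disjoint, so $F_2\cap F_1=\emptyset$, violating the definition of a shelling. For the inductive step, split on the value of $k$. If $0<k<d-1$, look at $F_{2k+1}$: its antipode is not among $F_1,\dots,F_{2k}$ (those form complete pairs), so each $F_{2k+1}\cap F_i$ is a $(d-2)$-face, and together they give $2k$ facets of the $(d-1)$-cube $F_{2k+1}$ which form $k$ antipodal pairs inside $F_{2k+1}$ (antipodal relations of cube facets descend to cube facets of any facet). This collection must be a beginning segment of some shelling of $F_{2k+1}$, contradicting the inductive hypothesis applied with parameters $(d-1,k)$. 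If $k=d-1$, apply the same type of analysis at step $2d-2$ instead of $2k+1$: the antipode of $F_{2d-2}$ lies among $F_1,\dots,F_{2d-3}$, and the remaining $2d-4$ earlier facets form $d-2$ complete antipodal pairs whose intersections with $F_{2d-2}$ yield a forbidden $d-2$ antipodal pairs of facets inside the $(d-1)$-cube $F_{2d-2}$, contradicting the hypothesis with $(d-1,d-2)$ (valid for $d\ge 3$).

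For sufficiency (good pair sequence $\Rightarrow$ LP-realizability), I would proceed by induction on $d$. The plan is to identify, for any good pair sequence, a \emph{splittable} pair --- one whose removal, after compressing the remaining labels back to $\{1,\dots,2(d-1)\}$ in order, yields another good pair sequence, this time in dimension $d-1$. An LP realization of the reduced pair sequence (by induction) can then be extended by embedding the $(d-1)$-dimensional realization into a hyperplane and placing the omitted antipodal pair along a new transversal direction, choosing coordinate values that insert the two removed labels at the correct positions of the ordering. An alternative is a direct construction: take $f$ to be the first coordinate, place the vertex of label $i$ at first coordinate $i$, and choose the remaining $d-1$ coordinates in sufficiently general position.

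The main obstacle is the sufficiency direction. Proving that every good pair sequence admits a splittable pair is a delicate combinatorial statement which does not follow from simply removing, say, the first or the last pair (small examples show these choices can create bad sub-sequences). Equivalently, showing that the generic direct placement actually realizes the combinatorial crosspolytope amounts to dualising the Key Lemma used in the necessity argument, and I expect the precise matching between the bad-sequence obstruction and the geometric convex-realizability obstruction to be the technical heart of the proof.
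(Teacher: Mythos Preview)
The paper does not prove this theorem; it is quoted from Develin~\cite{Mi06}. The only related content in the present paper is the remark immediately following the statement, which cites \cite[Ex.~8.1(i)]{Ziegler} for the one-to-one correspondence between good pair sequences and shellings of the $d$-cube. There is therefore no proof in the paper to compare against.

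Your necessity argument is essentially a self-contained proof of that Ziegler exercise, routed through the shelling property (the Corollary in Section~4), and it is correct. The inductive step of your Key Lemma works as stated: in the case $0<k<d-1$, the $2k$ intersections $F_{2k+1}\cap F_i$ are indeed $k$ antipodal pairs of facets of the $(d-1)$-cube $F_{2k+1}$, and any ordering of them would violate the inductive hypothesis; the case $k=d-1$ is handled cleanly by dropping to $F_{2d-2}$ and discarding its (empty-intersecting) antipode.

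Your sufficiency direction, however, is only a sketch, and you say so yourself. Neither of your two plans is carried out: you have not shown that every good pair sequence contains a \emph{splittable} pair (and, as you note, removing the first or last pair can fail), nor have you verified that the ``generic direct placement'' actually produces a crosspolytope with the prescribed orientation. Until one of these is completed, the sufficiency half is a genuine gap rather than a proof.
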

Vertex orderings of $C_d$ which give good pair sequences are completely in one-to-one correspondence with
 the shellings of $d$-cube~\cite[Ex 8.1 (i)]{Ziegler} as noted in~\cite{Mi06}. 
This implies that the shelling property completely characterizes LP orientations of $d$-crosspolytopes.

We can also read off whether a given acyclic orientation ${\it O}$ has a unique source or not by looking at
the corresponding pair sequence $(L_1 L_2)(L_3 L_4)\dots (L_{2d-1}L_{2d})$:
${\it O}$ has a unique source if and only if $(L_1,L_2) \neq (1,2)$.
Similarly, ${\it O}$ has a unique sink if and only if $(L_{2d-1},L_{2d}) \neq (2d-1,2d)$.
We can easily see that every acyclic USO of $d$-crosspolytope satisfies the Holt-Klee property.
Using these characterizations, we can count the number of pair sequences that give acyclic USOs that satisfy the Holt-Klee property
but do not satisfy the shelling property.
\begin{cor}
Let $a_d$ be the number of of good pair sequences of
$d$-crosspolytope, and let $b_d$ be the number of 
pair sequences arising from
acyclic USOs satisfying the Holt-Klee property.
Then the following holds:
\begin{equation}
\label{ad}
 a_d = (2d-1)!!-\sum_{k=1}^{d-1}{(2d-2k-1)!!a_{k}}
\end{equation}
\begin{equation}
\label{bd}
 b_d = \{ (2d-3)^2+1\} (2d-5)!! 
\end{equation}
for $d \geq 2$, where $(-1)!!:=1$ and $a_1 = 1$.
\end{cor}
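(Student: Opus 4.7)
The plan is to count pair sequences directly, exploiting the fact that both the ``LP orientation'' property (by Theorem~\ref{mi06}) and the ``acyclic USO + Holt--Klee'' property are conditions that depend only on the pair sequence, not on the particular acyclic orientation that gives rise to it. The total number of pair sequences of $C_d$ is $(2d-1)!!$, since the conditions $L_1<L_3<\dots<L_{2d-1}$ and $L_{2i-1}<L_{2i}$ put a pair sequence into bijection with a perfect matching of $\{1,\dots,2d\}$.

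For the recurrence \eqref{ad} on $a_d$, I would classify \emph{bad} pair sequences (those failing the condition of Theorem~\ref{mi06}) by the smallest index $k$ with $1\le k < d$ such that $\{L_1,\dots,L_{2k}\}=\{1,\dots,2k\}$. Minimality of $k$ forces the prefix $(L_1 L_2)\cdots(L_{2k-1}L_{2k})$ to itself be a \emph{good} pair sequence on $\{1,\dots,2k\}$, contributing $a_k$ choices; the remaining $d-k$ pairs form an unrestricted pair sequence on $\{2k+1,\dots,2d\}$, contributing $(2d-2k-1)!!$ choices. Summing over $k$ and subtracting the bad sequences from the total gives
$$a_d=(2d-1)!!-\sum_{k=1}^{d-1}(2d-2k-1)!!\,a_k,$$
with the base case $a_1=1$ coming directly from the unique pair sequence $(1\,2)$.

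For the closed form \eqref{bd} on $b_d$, I first observe that every proper face of $C_d$ is a simplex: a $k$-face is the convex hull of $k+1$ vertices chosen one from each of $k+1$ antipodal pairs, and these are affinely independent. Hence on any proper face the restriction of an acyclic orientation is a total order, which automatically has a unique source and a unique sink. Thus an acyclic orientation of $C_d$ is a USO precisely when it has a unique source and a unique sink on the whole polytope, and by the characterization recalled just before the corollary this amounts to $(L_1,L_2)\neq(1,2)$ and $(L_{2d-1},L_{2d})\neq(2d-1,2d)$. Since Holt--Klee is noted to be automatic in this setting, $b_d$ is the number of pair sequences avoiding both forbidden patterns. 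Inclusion--exclusion then gives
$$b_d=(2d-1)!!-2(2d-3)!!+(2d-5)!!,$$
because fixing either forbidden pair leaves $(2d-3)!!$ matchings on the remaining $2d-2$ labels and fixing both leaves $(2d-5)!!$ on the remaining $2d-4$ labels. Factoring $(2d-5)!!$ and using $(2d-1)(2d-3)-2(2d-3)+1=(2d-3)^2+1$ produces \eqref{bd}.

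The only real obstacle is the ``good prefix'' observation in the recurrence for $a_d$, where the minimality of $k$ must be used to conclude that $(L_1L_2)\cdots(L_{2k-1}L_{2k})$ is itself a good pair sequence rather than just any pair sequence on $\{1,\dots,2k\}$; everything else is elementary counting once the pair-sequence characterizations from Theorem~\ref{mi06} and the source/sink remarks are invoked, and the simplex structure of proper faces of $C_d$ is noted.
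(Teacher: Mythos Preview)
Your proposal is correct and follows essentially the same approach as the paper: classifying bad pair sequences by the smallest $k$ at which the condition of Theorem~\ref{mi06} fails to get the recurrence for $a_d$, and inclusion--exclusion on the forbidden first/last pairs to get $b_d$. Your explicit remark that proper faces of $C_d$ are simplices (so that the USO condition reduces to checking the whole polytope) is a welcome justification that the paper leaves implicit in the discussion preceding the corollary.
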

\begin{proof}
A recurrence relation for $a_d$ is given in a preprint version of \cite{Mi06}, 
but it is slightly incorrect. We follow the same approach here
correcting the error.
We use the equivalence of good pair sequences
and LP-orientations given in Theorem \ref{mi06}. 
First, the number of all pair sequences 
not necessarily satisfying condition 2 of Definition \ref{psdef} is
\[  \begin{pmatrix} 2d \\ 2 \end{pmatrix} \cdot \begin{pmatrix} 2d-2 \\ 2 \end{pmatrix} \cdot \cdots \cdot \begin{pmatrix} 2 \\ 2 \end{pmatrix}. \]
By sorting the pairs in the sequences according to the values of the first elements of the pairs, we obtain pair sequences satisfying the condition 2.
Since $d!$ sequences result in the same sequences, the number of all pair sequence is
\[  \begin{pmatrix} 2d \\ 2 \end{pmatrix} \cdot \begin{pmatrix} 2d-2 \\ 2 \end{pmatrix} \cdot \cdots \cdot \begin{pmatrix} 2 \\ 2 \end{pmatrix}/d!
=(2d-1)(2d-3)\cdots 1 = (2d-1)!!. \]
We have to subtract the number of bad pair sequences. We count the number of bad pair sequences that violate at first the condition of good pair sequences
in $k$-th pair. In this case, the sequence formed by the first $k$ pairs is a good pair sequence and the remaining part can be any pair sequences.
Therefore the number of bad pair sequences that violate at first the condition of good pair sequences in $k$-th pair is $a_k(2d-2k-1)!!$. 
Finally, we have
\[ a_d = (2d-1)!!-\sum_{k=1}^{d-1}{(2d-2k-1)!!a_{k}}.\]
Next we consider $b_d$.
If {\it O} has multiple sources, the corresponding pair sequence starts with
the pair $(12)$.
The number of sequences for this case is $(2d-3)!!$.
The number for the case where there are multiple sinks is also $(2d-3)!!$.
Since we are double-counting the case where {\it O} has multiple sources and multiple sinks, we have
\[ b_d = (2d-1)!!-2(2d-3)!!+(2d-5)!! = \{ (2d-3)^2+1\} (2d-5)!!.\]
\end{proof}

Using the recurrence relation for $a_d$, we obtain $a_d < (2d-3)(2d-3)!!$ for $d \geq 4$ (see appendix).
Therefore, the shelling property excludes $b_d-a_d > (2d-5)!!$ pair sequences from all pair sequences
that define acyclic USOs satisfying the Holt-Klee property.
Finally, we note that the ratio of the number of orientations satisfying the shelling property and that satisfying the other 3 properties 
is $\frac{a_d}{b_d} < 1 - \frac{1}{(2d-3)^2+1}.$
\section{Concluding remarks}
In this paper we introduce a new definition of the shelling property and give new 
infinite families of polytopes with $X$-type graphs, namely
polytopal digraphs which are acyclic USOs satisfying the Holt-Klee property,
but not the shelling property.
We achieved this with two operations to construct new $X$-type graphs from an $X$-type graph:
truncation and pyramid.
The pyramid operation works in any dimension but the truncation operation is restricted
to dimension 4, and to polytopes whose unique sink is simple.
It would be of interest to remove these conditions.
Those results shows abundance of $X$-type graphs and then importance of the shelling property.
We also see importance of the shelling property from a direct consequence of the result of Develin~\cite{Mi06}:
there are some polytopes whose LP orientations can completely be characterized by the shelling property but
cannot be done by the other 3 properties.


\section*{Appendix}
\begin{prop}
\begin{equation}
\label{upper}
 a_d < (2d-3)(2d-3)!!
\end{equation}
\begin{equation}
\label{lower}
 a_d > (2d-4)(2d-3)!! 
\end{equation}
for $d \geq 4$.
\end{prop}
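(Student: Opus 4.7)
We prove both inequalities simultaneously by strong induction on $d \ge 4$. The recurrence \eqref{ad} rewritten as $a_d + S_d = (2d-1)!! = (2d-1)(2d-3)!!$, where $S_d := \sum_{k=1}^{d-1}(2d-2k-1)!!\,a_k$, immediately shows that the two target inequalities $(2d-4)(2d-3)!! < a_d < (2d-3)(2d-3)!!$ are together equivalent to the single two-sided bound $2(2d-3)!! < S_d < 3(2d-3)!!$. Since $S_d$ is a positive linear combination of $a_1,\ldots,a_{d-1}$, this form is well suited to induction.

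The base case $d=4$ is verified by direct computation: from $a_1=1$, $a_2=2$, $a_3=10$ the recurrence gives $a_4=74$, which lies in $(60,75)$. For the inductive step at $d \ge 5$, under the hypothesis $(2k-4)(2k-3)!! < a_k < (2k-3)(2k-3)!!$ for $4 \le k \le d-1$, one writes
\[
S_d \,=\, (2d-3)!! + 2(2d-5)!! + 10(2d-7)!! + \sum_{k=4}^{d-1}(2d-2k-1)!!\,a_k
\]
and substitutes the inductive lower bounds in the tail to obtain $S_d > 2(2d-3)!!$ (equivalently, the upper bound $a_d < (2d-3)(2d-3)!!$), and the inductive upper bounds to obtain $S_d < 3(2d-3)!!$ (equivalently, the lower bound $a_d > (2d-4)(2d-3)!!$).

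A slicker reformulation is obtained by iterating \eqref{ad}, i.e.\ repeatedly substituting $S_{d-1} = (2d-3)!! - a_{d-1}$ into the one-step consequence $a_d = (2d-2)a_{d-1} + 2\sum_{k=1}^{d-2} k(2d-2k-3)!!\,a_k$, which collapses to the closed identity
\[
a_d \,=\, 2(d-1)(2d-3)!! - 2\,T_d, \qquad T_d := \sum_{\ell=1}^{d-2}(d-1-\ell)(2d-2\ell-3)!!\,a_\ell.
\]
The leading term $2(d-1)(2d-3)!!$ is precisely the midpoint of the two target values, so both inequalities collapse to the single symmetric estimate $(2d-3)!!/2 < T_d < (2d-3)!!$, which is a cleaner target for the inductive argument.

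The main obstacle is the tightness of these bounds: the total slack in each direction is only $(2d-3)!!/2$, and the $\ell=1$ summand of $T_d$ already equals $(d-2)(2d-5)!! = \tfrac{d-2}{2d-3}(2d-3)!!$, which tends to $(2d-3)!!/2$ from below as $d \to \infty$. Thus the remaining summands must collectively push $T_d$ strictly above $(2d-3)!!/2$ without overshooting $(2d-3)!!$, and applying only the weakest inductive estimate $a_\ell > (2\ell-4)(2\ell-3)!!$ (respectively the weakest upper bound) leaves insufficient room; one must either retain several leading correction terms explicitly, combine both the inductive upper and lower bounds simultaneously, or strengthen the inductive hypothesis to a sharper intermediate inequality that propagates cleanly from $d-1$ to $d$.
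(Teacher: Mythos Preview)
Your overall framework---simultaneous strong induction on $d$, recast as two-sided bounds on $S_d=\sum_{k=1}^{d-1}(2d-2k-1)!!\,a_k$---is exactly the paper's strategy. The gap is that your second paragraph \emph{asserts} ``substitute the inductive lower bounds in the tail to obtain $S_d>2(2d-3)!!$'' and likewise for the upper bound, but you never carry out these estimates; and your final paragraph, discussing the equivalent $T_d$ form, openly concedes that naive substitution ``leaves insufficient room'' and lists several possible fixes without choosing one. So what you have is a correct plan plus an honest diagnosis of the difficulty, not a proof: the two summation estimates you leave unverified are precisely the content of the proposition.

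The paper closes both gaps with concrete devices you do not mention. For the upper bound on $a_{d+1}$ (equivalently the lower bound on $S_{d+1}$) it does \emph{not} keep the whole tail; it simply drops all terms except $k=1,2,d-1,d$, and then plugs in the inductive lower bounds for $a_{d-1}$ and $a_d$---a three-line computation shows this already forces $a_{d+1}<(2d-1)(2d-1)!!$. For the lower bound on $a_{d+1}$ the crucial missing ingredient is the elementary product inequality
\[
(2d-2k+1)!!\,(2k-3)!!\ \le\ 3\,(2d-5)!!\qquad(4\le k\le d-1),
\]
which lets the entire tail $\sum_{k=4}^{d-1}(2d-2k+1)!!(2k-3)(2k-3)!!$ be bounded by $3(2d-5)!!\sum_{k=4}^{d-1}(2k-3)=3d(d-4)(2d-5)!!$; after that the inequality $a_{d+1}>(2d-2)(2d-1)!!$ reduces to a quadratic check in $d$. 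Supplying these two steps (and taking $d=4,5$ as base cases so that the inductive hypothesis is available at $k=d-1,d$) turns your outline into a complete proof. Your $T_d$ identity is correct and pleasant, but as you yourself note it does not by itself sidestep the need for a product bound of the type above.
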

\begin{proof}
We proceed by induction on $d$ proving (\ref{upper}) and (\ref{lower})
simultaneously.
Since $a_1=1,a_2=2,a_3=10,a_4=74,a_5=706$, the relation holds for $a_4,a_5$.
To prove (\ref{upper}), we use (\ref{ad}) and take only the first two and
last two terms of the summation. We see that for $d \ge 5$
\begin{align*}
a_{d+1}
&\le (2d+1)!!-(2d-1)!!a_1-(2d-3)!!a_2-a_d-3a_{d-1}\\
&< \{(2d+1)(2d-1)-(2d-1)-2-(2d-4)\}(2d-3)!!-3(2d-6)(2d-5)!!
~~(using~(\ref{lower}))\\
&< (4d^2 -4d+1)(2d-3)!!\\
&= (2d-1)(2d-1)!!.
\end{align*}
For (\ref{lower}), we first observe that for $4 \le k \le d-1$
\[
(2k-3)!! \le (2d-5)(2d-7)...(2d-2k+3)3
\]
and so
\begin{equation}
\label{bound}
(2d-2k+1)!!(2k-3)!! \le 3(2d-5)!!
\end{equation}
Now expanding (\ref{ad}) for $d \ge 5$
we have:
\begin{align*}
a_{d+1} 
&= (2d+1)!!-(2d-1)!!a_1-(2d-3)!!a_2-(2d-5)!!a_3 
-a_d-\sum_{k=4}^{d-1}{(2d-2k+1)!!a_k}\\
&> 2d(2d-1)!!-(2d-1)(2d-3)!!-10(2d-5)!!
-\sum_{k=4}^{d-1}{(2d-2k+1)!!(2k-3)(2k-3)!!}
~~(using~(\ref{upper}))\\
&\ge (2d-1)(2d-1)!!-\{10+3\sum_{k=4}^{d-1}{(2k-3)}\}(2d-5)!!
~~(using~(\ref{bound}))\\
&= (2d-1)(2d-1)!!-\{10+3d(d-4)\}(2d-5)!!\\
&> (2d-2)(2d-1)!!.
\end{align*}
This completes the proof.
\end{proof}

%
%

\end{document}